%
\documentclass{llncs}

%
%

\usepackage{amsmath}
\usepackage{amsfonts}
\usepackage{amssymb}
\usepackage{verbatim}
\usepackage{pst-all}
\usepackage[dvips,xdvi]{graphicx}
\usepackage{graphics}

\usepackage{stmaryrd}
\usepackage{verbatim}

\usepackage{floatflt}

%
%
%
%
%

\sloppy

\begin{document}

\title{Finite Optimal Control for Time-Bounded Reachability in CTMDPs and Continuous-Time Markov Games
\thanks{This work was partly supported by the German Research
Foundation (DFG) as part of the Transregional Collaborative Research
Center ``Automatic Verification and Analysis of Complex Systems''
(SFB/TR 14 AVACS) and by the Engineering and Physical Science Research Council (EPSRC) through grant EP/H046623/1 ``Synthesis and Verification in Markov Game Structures''.}%
}



\author{Markus Rabe\inst{1} \and Sven Schewe\inst{2}}
\institute{
Universit\"at des Saarlandes
\and
University of Liverpool
}

\pagestyle{plain}


\date{Received: date / Accepted: date}


\newenvironment{proofidea}{
        \noindent {\bf Proof Idea: }}{ \qed}

\newcommand{\R}{\mathbb{R}} 
\newcommand{\N}{\mathbb{N}} 
\newcommand{\Q}{\mathbb{Q}} 


\newcommand{\locations}{{L}}
\newcommand{\act}{\mathit{Act}} 
\newcommand{\ratematrix}{\mathbf{R}}
\newcommand{\probabilitymatrix}{\mathbf{P}}

\newcommand{\nM}{\ensuremath{n_{\mathcal M}}}
\newcommand{\M}{{\ensuremath\mathcal{M}}}
\newcommand{\U}{{\ensuremath\mathcal U }}
\renewcommand{\S}{\ensuremath{\mathcal{S}}}
\newcommand{\I}{\ensuremath{\mathcal{J}}}
\newcommand{\G}{\ensuremath{\mathcal{G}}}
\newcommand{\C}{\ensuremath{\mathcal{C}}}

\newcommand{\dist}{\mathit{Dist}}
\newcommand{\paths}{\mathit{Paths}}
\newcommand{\pathsabs}{\mathit{Paths}_{\mathit{abs}}}


\hyphenation{CTMDP}
\hyphenation{DTMDP}
\hyphenation{CTMC}
\hyphenation{MDP}
\hyphenation{CTMDPs}
\hyphenation{DTMDPs}
\hyphenation{CTMCs}
\hyphenation{MDPs}

\newcommand{\prob}{\mathit{Pr}}
\newcommand{\Prob}{\mathit{PR}}
%
%

\maketitle

\begin{abstract}
We establish the existence of optimal scheduling strategies for time-bounded reachability in continuous-time Markov decision processes, and of co-optimal strategies for continuous-time Markov games.
Furthermore, we show that optimal control does not only exist, but has a surprisingly simple structure: 
The optimal schedulers from our proofs are deterministic and timed-positional, and the bounded time can be divided into a finite number of intervals, in which the optimal strategies are positional.
That is, we demonstrate the existence of \emph{finite} optimal control. 
Finally, we show that these pleasant properties of Markov decision processes extend to the more general class of continuous-time Markov games, and that both early and late schedulers show this behaviour. 
\end{abstract}


\section{Introduction}

\begin{figure}[t]
\vspace{-.3cm}
\centering
\large
\begin{pspicture}[showgrid=false](-0.7,0)(3.7,3)
	\psset{arrowsize=5pt,nodesep=0pt,arrowlength=1,linewidth=1pt}
	\cnode[](0,1){10pt}{1}
	\rput(0,1){A}
	\psline[]{->}(-.7,1.7)(-.25,1.25)
	\cnode[linewidth=0.8pt](3,1){10pt}{goal}
	\cnode[linewidth=0.8pt](3,1){8pt}{goalinner}
	\rput(3,1){C}
	\cnode[](1.5,3.3){10pt}{2}
	\rput(1.5,3.3){B}
	\ncline[]{->}{1}{2}
	\ncline[]{->}{2}{goal}
	\ncline[]{->}{1}{goal}
	\rput(0.65,2.4){4}
	\rput(0.25,1.7){$a$}
	\rput(2.25,2.55){$a$}
	\rput(1.8,0.8){2}
	\rput(0.7,0.8){\bfseries $b$}
	\rput(2.75,1.8){4}
	\rput(8.4,3.4){$t'$}
\end{pspicture}
\normalsize
\vspace{-.3cm}\includegraphics[width=0.52\columnwidth]{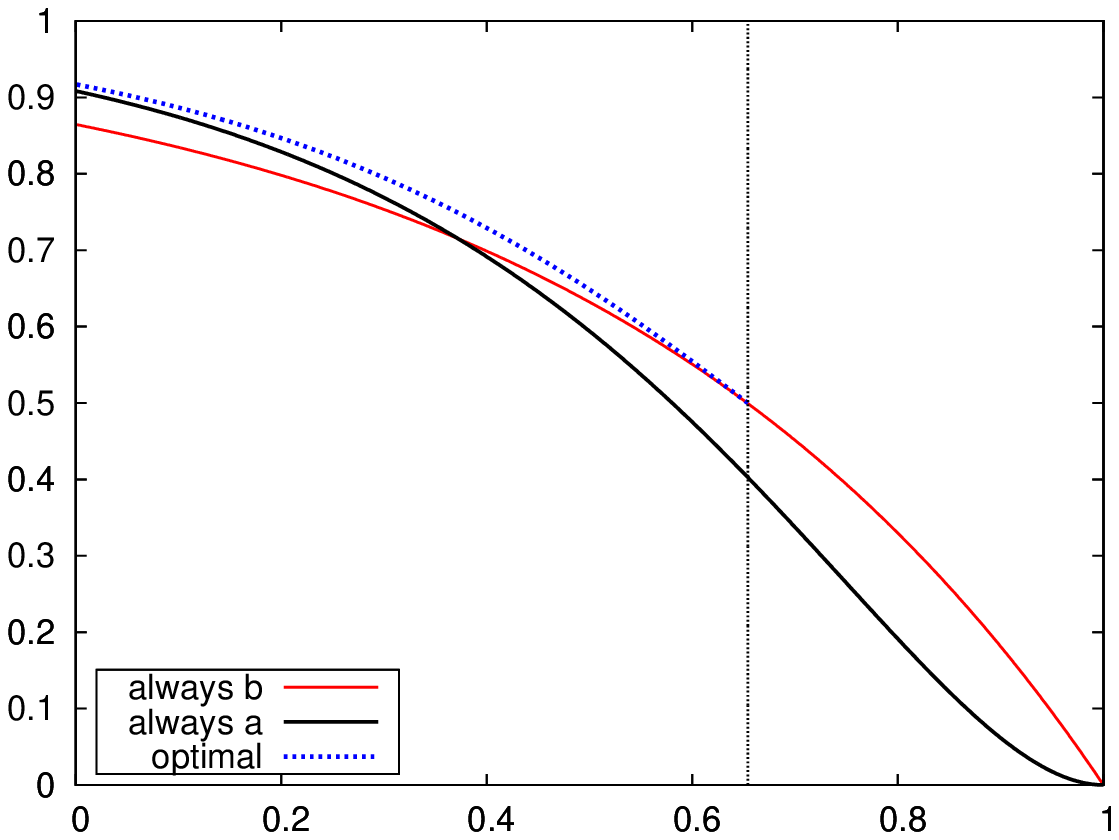}
\caption{A CTMDP and the reachability probabilities for all positional schedulers with time bound $t_{\max}=1$. 
Time $t'=t_{\max}-\frac{1}{2}\log(2)$ is the optimal time to switch to action $b$. \vspace{-.4cm}
}%
\label{fig:automatonAndReachability}%
\end{figure}

Continuous-time Markov decision processes (CTMDPs) are a widely used framework for dependability analysis and for modelling the control of manufacturing processes~\mbox{\cite{Puterman1994,Feinberg/04/MDP}}, because they combine real-time aspects with probabilistic behaviour and non-deterministic choices. 
%
CTMDPs can also be viewed as a framework that unifies different stochastic model types~\cite{sanders1989rbm,Puterman1994,GeneralizedStochasticPetriNetsAjmone95,hermanns2002imc,NeuhausserDelayedNondeterminism09}.

While CTMDPs allow for analysing worst-case and best-case scenarios, they fall short of the demands that arise in many real control problems, as they disregard the different nature that non-determinism can have depending on its source: 
Some sources of non-determinism are supportive, while others are hostile, and in a realistic control scenario, we face both types of non-determinism at the same time:
Supportive non-determinism can be used to model the influence of a controller on the evolution of a system, while hostile non-determinism can capture abstraction or unknown environments. 
We therefore consider a natural extension of CTMDPs: Continuous-time Markov games (CTMGs) that have two players with opposing objectives~\cite{Brazdil+al/09/TimeBoundedReachability}. 

The analysis of CTMDPs and CTMGs requires to resolve the non-deterministic choices by means of a scheduler (which consists of a pair of strategies in the case of CTMGs), and typically tries to optimise a given objective function. 

In this paper, we study the \emph{time-bounded reachability problem}, which recently enjoyed much attention~\cite{
Baier+all/05/efficientCTMDP,ZHHW08,NeuhausserDelayedNondeterminism09,neuhausserzhangTimeBoundedReachabilityReport,Brazdil+al/09/TimeBoundedReachability}.
Time-bounded reachability in CTMDPs is the standard control problem to construct a scheduler that controls the Markov decision process such that the probability of reaching a goal region within a given time bound is maximised (or minimised), and to determine the value.
For CTMGs, time-bounded reachability reduces to finding a Nash equilibrium, that is, a pair of strategies for the 
players, such that each strategy is optimal for the chosen strategy of her opponent.
\pagebreak[3]

While continuous-time Markov games are a young field of research \cite{Brazdil+al/09/TimeBoundedReachability,Rabe+Schewe/10/OptimalSchedulersCTMDP}, Markov decision processes have been studied for decades \cite{Kakumanu/71/CTMDPs,Bellman/57/DP}.

Optimal control in CTMDPs clearly depends on the observational power we allow our schedulers to have when observing a run.
In the literature, various classes of schedulers with different restrictions on what they can observe~\cite{WolovickJohr06Meaningful,NeuhausserDelayedNondeterminism09,Brazdil+al/09/TimeBoundedReachability,Rabe+Schewe/10/OptimalSchedulersCTMDP} are considered.
We focus on the most general class of schedulers, schedulers that can fully observe the system state and may change their decisions at any point in time (\emph{late schedulers}, cf.~\cite{Bellman/57/DP,NeuhausserDelayedNondeterminism09}).
To be able to translate our results to the more widespread class of schedulers that fix their decisions when entering a location (early schedulers), we introduce discrete locations that allow for a translation from early to late schedulers (see Appendix~\ref{app:late2early}). 

Due to their practical importance, time-bounded reachability for continuous-time Markov models has been studied intensively~\cite{Brazdil+al/09/TimeBoundedReachability,Bellman/57/DP,neuhausserzhangTimeBoundedReachabilityReport,Baier+all/05/efficientCTMDP,NeuhausserDelayedNondeterminism09,Aziz/00/exactModelCheckingCTMC,Baier+Katoen+Hermanns/99/approximate,ZHHW08}.
However, most previous research focussed on \emph{approximating} optimal control.
(The existence of optimal control is currently only known for the artificial class of \emph{time-abstract} schedulers~\cite{Brazdil+al/09/TimeBoundedReachability,Rabe+Schewe/10/OptimalSchedulersCTMDP}, which assume that the scheduler has no access whatsoever to a clock.)
While an efficient approximation is of interest to a practitioner, being unable to determine whether or not optimal control \emph{exists} is very dissatisfying from a scientific point of view.

\paragraph{\bf Contributions. }
This paper has three main contributions:
First, we extend the common model of CTMDPs by adding discrete locations, which are passed in $0$ time.
This generalisation of the model is mainly motivated by avoiding the discussion about the appropriate scheduler class.
In particular, the widespread class of schedulers that fix their actions when entering a location 
can be encoded by a simple mapping. 

The second contribution of this paper is the answer to an intriguing research question that remained unresolved for half a century: We show that optimal control of CTMDPs exists for time-bounded reachability and safety objectives.
Moreover, we show that optimal control can always be \emph{finite}.

Our third contribution is to lift these results to continuous-time Markov games.

Pursuing a different research question, we exploit proof techniques that differ from those frequently used in the analysis of CTMDPs.
Our proofs build mainly on topological arguments: The proof that demonstrates the existence of measurable optimal schedulers, for example, shows that we can fix the decisions of an optimal scheduler successively on closures of open sets (yielding only measurable sets), and the lift to finiteness uses local optimality  of positional schedulers in open left and right environments of arbitrary points of times and the compactness of the bounded time interval.

\paragraph{\bf Structure of the Paper. } We follow a slightly unorthodox order of proofs for a mathematical paper: we start with a special case in Section~\ref{sect:TT} and generalise the results later. 
Besides keeping the proofs simple, this approach is chosen because the simplest case, CTMDPs, is the classical case, and we assume that a wider audience is interested in results for these structures. 
In the following section, we strengthen this result by demonstrating that optimal control does not only exist, but can be found among schedulers with finitely many switching points and positional strategies between them.
In Section~\ref{sec:discrete}, we lift this result to single player games (thus extending it to other scheduler classes like those which fix their decision when entering a location, cf.\ Appendix~\ref{app:late2early}).
In the final section, we generalise the existence theorem for finite optimal control to finite co-optimal strategies for general continuous-time Markov games.

\section{Preliminaries}\label{sect:preliminaries}


{A continuous-time Markov game is a tuple $(\locations,\locations_d,\locations_c,\locations_r,\locations_s,G,\act,\ratematrix,\probabilitymatrix,\nu)$, consisting of}
\begin{itemize}
 \item a finite set $\locations$ of locations, which is partitioned into
\begin{itemize}
\item a set $\locations_d$ of \emph{discrete} locations and a set $\locations_c$ of \emph{continuous} locations, and
\item sets $\locations_r$ and $\locations_s$ of locations owned by a \emph{reachability} and a \emph{safety} player, 
\end{itemize}
\item a dedicated set $G \subseteq \locations$ of \emph{goal locations},
\item a finite set $\act$ of actions,
\item a rate matrix $\ratematrix: (\locations_c\times\act\times\locations) \to \Q_{\geqslant0}$,
\item a discrete transition matrix $\probabilitymatrix: (\locations\times\act\times\locations) \to \Q_{\geqslant0}\cap [0,1]$, and
\item an initial distribution $\nu\in\dist(\locations)$,
\end{itemize}
that satisfies the following side-conditions:
For all continuous locations $l\in\locations_c$, there must be an action $a\in\act$ such that $\ratematrix(l,a,\locations):=\sum_{l'\in \locations} \ratematrix(l,a,l')>0$;
we call such actions \emph{enabled}.
For actions enabled in continuous locations, we require $\probabilitymatrix(l,a,l')=\frac{\ratematrix(l,a,l')}{\ratematrix(l,a,\locations)}$, and we require $\probabilitymatrix(l,a,l')=0$ for the remaining actions.
For discrete locations, we require that either $\probabilitymatrix(l,a,l')=0$ holds for all $l' \in \locations$, or that $\sum_{l'\in \locations} \probabilitymatrix(l,a,l')=1$ holds true.
Like in the continuous case, we call the latter actions \emph{enabled} and require the existence of at least one enabled action for each discrete location $l\in \locations_d$. 

The idea behind discrete-time locations is that they execute immediately.
We therefore do not permit cycles of only discrete-time locations (counting every positive rate of any action as a transition).
This restriction is stronger than it needs to be, but it simplifies our proofs, and the simpler model is sufficient for our means.

We assume that the goal region is absorbing, that is $\probabilitymatrix(l,a,l')=0$ holds for all $l \in G$ and $l' \notin G$. 
See Section~\ref{sec:variances} for the extension to non-absorbing goal regions. 

Intuitively, it is the objective of the reachability player to maximise the probability to reach the goal region in a predefined time $t_0$, while it is the objective of the safety player to minimise this probability.
(Hence, it is a zero-sum game.)

We are particularly interested in (traditional) CTMDPs. They are single player CTMGs, where either all positions belong to the reachability player ($\locations=\locations_r$), or to the safety player  ($\locations=\locations_s$), without discrete locations ($\locations_d=\emptyset$ and $\locations_c=\locations$). 


\paragraph{\bf Paths. } A \emph{timed path} $\pi$ in a CTMG $\M$ is a finite sequence in $\locations\times(\act\times\R_{\geqslant0}\times\locations)^*=\paths(\M)$.
We write
\[
  l_0\xrightarrow{a_0,t_0} l_1\xrightarrow{a_1,t_1}
    \cdots ~ \xrightarrow{a_{n-1},t_{n-1}}l_n
\]
for a sequence $\pi,$ and we require $0 \leq t_{i-1} \leq t_{i} \leq t_{\max}$ for all $i<n$, where $t_{\max}$ is the time bound for our time-bounded reachability probability.
(We are not interested in the behaviour of the system after $t_{\max}$.)
The $t_i$ denote the system's time when the action $a_i$ is selected and a discrete transition from $l_i$ to $l_{i+1}$ takes place.
Concatenation of paths $\pi,\pi'$ will be written as $\pi\circ\pi'$ if the last location of $\pi$ is the first location of $\pi'$ and the points of time are ordered correctly. We call a timed path a \emph{complete} timed path when we want to stress that this path describes a complete system run, not to be extended by further transitions.

\paragraph{\bf Schedulers and Strategies.}

The nondeterminism in the system needs to be resolved by a scheduler which maps paths to decisions. 
The power of schedulers is determined by their ability to observe and distinguish paths, and thus by their domain. 
%
In this paper, we consider the following common scheduler classes: 
\begin{itemize}
\item \emph{Timed history-dependent} (TH) schedulers \hfill
	   $\paths(\M)\times\R_{\geqslant0}\rightarrow D$ \hspace{1cm}\mbox{} \\
	   that map timed paths and the remaining time to decisions.
	\item \emph{Timed positional} (TP) schedulers \hfill
	   $\locations\times\R_{\geqslant0}\rightarrow D$ \hspace{1cm}\mbox{} \\
	   that map locations and the remaining time to decisions.
	\item \emph{Positional} (P) or memoryless schedulers \hfill
	   $\locations\rightarrow D$ \hspace{1cm}\mbox{}\\
	   that map locations to decisions. 
\end{itemize}
Decisions $D$ are either randomised (R), in which case $D = \dist(\act)$ is the set of distributions over enabled actions, or are restricted to deterministic (D) choices, that is $D = \act$.
Where it is necessary to distinguish randomised and deterministic versions we will add a postfix to the scheduler class, for example THD and THR. 

\emph{Strategies. } In case of CTMGs, a scheduler consists of the two participating players' strategies, which can be seen as functions $\paths(\M_p)\times\R_{\geqslant0}\rightarrow D$,
where $\paths(\M_p)$ denotes, for $p\in \{r,s\}$, the paths ending on the position of the reachability or safety player, respectively.
As for general schedulers, we can introduce restrictions on what players are able to observe.

\emph{Discrete locations. } The main motivation to introduce discrete locations was to avoid the discussion whether a scheduler has to fix his decision, as to which action it chooses, upon entering a location, or whether such a decision can be revoked while staying in the location. 
For example, the general measurable schedulers discussed in~\cite{WolovickJohr06Meaningful} have only indirect access to the remaining time (through the timed path), and therefore have to decide upon entrance of a location which action they want to perform. 
Our definition builds on fully-timed schedulers (cf.~\cite{Bellman/57/DP}) that were recently rediscovered and formalised by Neuh\"au{\ss}er et al.~\cite{NeuhausserDelayedNondeterminism09}, which may revoke their decision after they enter a location.
(As a side result, we lift Neuh\"au{\ss}er's restriction to local uniformity.)
The discrete locations now allow to encode making the decision upon entering a continuous location $l$ by mapping the decision to a discrete location that is `guarding the entry' to a family of continuous locations, one for each action enabled in~$l$.
(See Appendix~\ref{app:late2early} for details.)

\paragraph{\bf Cylindrical Schedulers. }
While it is common to refer to TH schedulers as a class, the truth is that there is no straightforward way to define a measure for the time-bounded reachability probability for the complete class (cf.\ \cite{WolovickJohr06Meaningful}).
We therefore turn to a natural subset that can be used as a building block for a powerful yet measurable sub-class of TH schedulers, which is based on cylindrical abstractions of paths.

Let $\I$ be a finite partition of the interval $[0,t_{\max}]$ into intervals $I_0=[0,t_0]$ and $I_i=(t_{i-1},t_i]$ for $i=1,\ldots,n$ with $t_0\geq 0$ and $t_i>t_{i-1}$ for $i=1,\ldots,n$, where $t_n = t_{\max}$ is the time-bound from the problem definition.
Then we denote with $[t]_\I$ the interval $I_i \in \I$ that contains $t$, called the \emph{\I-cylindrification of $t$}, and we denote with
$[\pi]_\I = l_0\xrightarrow{a_0,[t_0']_\I} l_1\xrightarrow{a_1,[t_1']_\I} \cdots ~ \xrightarrow{a_{n-1},[t_{n-1}']_\I}l_n$ the \emph{$\I$-cylindrification} of the timed path
$\pi=l_0\xrightarrow{a_0,t_0'} l_1\xrightarrow{a_1,t_1'} \cdots ~ \xrightarrow{a_{n-1},t_{n-1}'}l_n$.

We call a TH scheduler \emph{$\I$-cylindrical} if its decisions depend only on the \emph{cylindrification} $[\pi]_\I$ and $[t]_\I$ of $\pi$ and $t$, respectively, and \emph{cylindrical} if it is $\I$-cylindrical for some finite partition $\I$ of the interval $I=[0,t_{\max}]$.

\paragraph{\bf Cylindrical Sets and Probability Space. }
For a given finite partition $\I$ of the interval $[0,t_{\max}]$, an \emph{$\I$-cylindrical} set of timed paths is the set of timed paths with the same $\I$-cylindrification, and
we call a finite partition $\I'$ of $[0,t_{\max}]$ a \emph{refinement} of $\I$ if every interval in $\I$ is the union of intervals in $\I'$.

For an $\I$-cylindrical scheduler $\S$ and an $\I'$-cylindrical set of finite timed paths, where $\I'$ is a refinement%
\footnote{The restriction to partitions $\I'$ that refine $\I$ is purely technical, because for arbitrary $\I'$ we can simply use a partition $\I''$ that refines both $\I$ and $\I'$, and reconstruct every $\I'$-cylindrical set as a finite union of $\I''$-cylindrical sets.}
of $\I$, the likelihood that a complete path is from this cylindrical set is easy to define:
Within each interval of $\I$, the likelihood that a CTMDP $\M$ with scheduler $\S$ behaves in accordance with the $\I'$-cylindrical set can---assuming compliance in all previous intervals---be checked like for a finite Markov chain.

The probability $p_{I_i}$ to comply with the $i$-th segment of the partition $\I'$ of $[0,t_{\max}]$ is the product of three multiplicands $(p_{I_i}=p_1^{I_i} \cdot p_2^{I_i} \cdot p_3^{I_i})$:
\begin{enumerate}
\item the probability $p_1^{I_i}$ that the actions are \emph{chosen} in accordance with the $\I'$-cylindrical set of timed paths (which is either $0$ or $1$ for discrete schedulers, and the product of the likelihood of the individual decisions for randomised schedulers),

\item the probability $p_2^{I_i}$ that the transitions are \emph{taken} in accordance with the $\I'$-cylindrical set of timed paths, provided the respective actions are chosen, which is simply the product over the individual probabilities $\probabilitymatrix(l_i,a_i,l_{i+1})$ in this sequence of the $\I'$-cylindrical set of timed paths, and
\item the probability $p_3^{I_i}$ that the \emph{right number of steps} is made in this sequence of the $\I'$-cylindrical set of timed paths.
\end{enumerate}

The latter probability $p_3^{I_i}$ is $0$ if the last location is a discrete location, as the system would leave this location at the same point in time in which it was entered.
Otherwise, it is the difference $p_3^{I_i}=p_4^{I_i}-p_5^{I_i}$ between the likelihood that at least the correct number of $n \geq 0$ transitions starting in continuous locations are made ($p_4^{I_i}$), and the likelihood that at least $n+1$ transitions starting in continuous locations are made ($p_5^{I_i}$) in the relevant sequence of the timed path.

Let $\overline{l}_0,\overline{l}_1,\ldots,\overline{l}_n$ be the $n$ continuous locations (named in the required order of appearance; note that $n$ might be $0$, and that the same location can occur multiple times), and let $\lambda_0,\lambda_1,\ldots,\lambda_n$ be the transition rate one would observe at the respective $\overline{l}_i$. For deterministic schedulers, this transition rate is simply $\lambda_i=\ratematrix(\overline{l}_i,a_i,\locations)$, where $a_i$ is the decision from $\S$ at the respective position in a timed path and in $I_i$. For a randomised scheduler $\S$, it is the respective expected transition rate $\lambda_i=\sum_{a\in \act(\overline{l}_i)} p_a \ratematrix(\overline{l}_i,a,\locations)$, where $p_a$ is the likelihood that $\S$ makes the decision $a$ at the respective position in a timed path and in $I_i$.
Note that the locations and transition rates are fixed.

The likelihood to get a path of length $\geq n$ is then \\ $\int_{(\tau_0,\ldots,\tau_{n-1})\in \Phi_{n,i}} \prod_{k=0}^{n-1} \lambda_k e^{-\lambda_k \tau_k}d\tau_k$ for $\Phi_{n,i} = \{(\tau_0,\ldots,\tau_{n-1})\in [0,t_{\max}]^n \mid \\ \sum_{j=0}^{n-1} \tau_j \leq t_i - t_{i-1}\}$ for $n>0$, and $1$ for $n=0$.
Likewise, the likelihood to get a path of length  $\geq n+1$ is $\int_{(\tau_0,\ldots,\tau_{n})\in \Phi_{n+1}^i} \prod_{k=0}^{n}\lambda_k e^{-\lambda_k \tau_k}d\tau_k$ for $\Phi_{n+1,i} = \{(\tau_0,\ldots,\tau_{n})\in [0,t_{\max}]^{n+1} \mid \sum_{j=0}^{n} \tau_j \leq t_i - t_{i-1}\}$.
(Recall that $t_i$ and $t_{i-1}$ are the upper and lower endpoints of the interval $I_i$.)

The likelihood that a complete timed path is in the $\I'$-cylindrical set of timed paths for $\S$ is the product $\prod_{I \in \I'}p_I$ over the individual $p_{I_i}$.

\paragraph{\bf Probability Space.}
Having defined a measure for the likelihood that, for a given cylindrical scheduler, a complete timed path is in a particular cylindrical set, we define the likelihood that it is in a finite union of disjoint sets of cylindrical paths as the sum over the likelihood for the individual cylindrical sets.

This primitive probability measure for primitive schedulers can be lifted in two steps by a standard space completion, going from this primitive measures to quotient classes of Cauchy sequences of such measures:
\begin{enumerate}
\item In a first step, we complete the space of measurable sets of complete timed paths from finite unions of cylindrical sets of timed paths to Cauchy sequences of finite unions of cylindrical sets of timed paths.
We define the required difference measure of two sets of timed paths (each a finite disjoint union of cylindrical sets) as the measure of the symmetrical difference of the two sets.
This set can obviously be represented as a finite disjoint union of cylindrical sets of timed paths, and we can use our primitive measure to define this difference measure.

\item Having lifted the measure to this completed space of paths, we lift the set of measurable \emph{schedulers} in a second step from cylindrical schedulers to Cauchy sequences of cylindrical schedulers.
(The difference measure between two cylindrical schedulers is the likelihood that two schedulers act observably different.)
\end{enumerate}

More details of these standard constructions can be found in Appendix~\ref{app:ourmeasure}.

\paragraph{\bf Time-Bounded Reachability Probability. }
For a given CTMG $(\locations,\locations_d,\locations_c,\locations_r,\locations_s,G,\act,\ratematrix,\probabilitymatrix,\nu)$ and a given measurable scheduler $\S$ that resolves the non-determinism, we use the following notations for the probabilities:
\begin{itemize}
\item $\prob_{\S}^{\M}(l,t)$ is the probability of reaching the goal region $G$ within time $t$ when starting in location $l$,

\item $\prob_{\S}^{\M}(t) = \sum_{l \in \locations}\nu(l)\prob_{\S}^{\M}(l,t)$ denotes the probability of reaching the goal region $G$ within time $t$.


\end{itemize}

As usual, the supremum 
of the time-bounded reachability probability over a particular scheduler class is called the time-bounded reachability of $\M$ for this scheduler class. 

\section{Optimal Scheduling in CTMDPs}\label{sect:TT}

In this section, we demonstrate the existence of optimal schedulers in traditional CTMDPs.
%
%
%
Before turning to the proof, let us first consider what happens if time runs out, that is, at time $t_{\max}$, and then develop an intuition what an optimal scheduling policy should look like.

If we are still in time ($t \leq t_{\max}$) and we are in a goal location, then we reach a goal location in time with probability $1$; and
if time has run out ($t=t_{\max}$) and we are not in a goal location, then we reach a goal location in time with probability $0$.
For ease of notation, we also fix the probability of reaching the goal location in time to $0$ for all points in time strictly \emph{after} $t_{\max}$.
For a measurable TPR scheduler $\S$, we would get:
\begin{itemize}
 \item $\prob_{\S}^{\M}(l,t) = 1$ holds for all goal locations $l \in G$ and all $t \leq t_{\max}$,
 \item $\prob_{\S}^{\M}(l,t_{\max}) = 0$ holds for all non-goal locations $l \notin G$, and
 \item $\prob_{\S}^{\M}(l,t) = 0$ holds for all locations $l \in \locations$ and all $t > t_{\max}$.
\end{itemize}

A scheduler $\S$ can, in every point in time, choose from distributions over successor locations.
Such a choice should be optimal, if the expected gain in the probability of reaching the goal location is maximised.

This gain has two aspects: first, the probability of reaching the goal location \emph{provided a transition is taken}, and second, the likelihood of \emph{taking a transition}.
Both are multiplicands in the defining differential equations, assuming a cylindrical TPD scheduler $\S$
\[-\dot{\prob}{}_{\S}^{\M}(l,t) = \sum_{l'\in\locations} \ratematrix\big(l,\S(l,t),l'\big) \cdot \left(\prob_{\S}^{\M}(l',t) -\prob_{\S}^{\M}(l,t)\right).\]
(We skip the simple generalisation to measurable TPD scheduler because it is not required in the following proofs.)

The reachability probability of any scheduler is therefore intuitively dominated by the function $f_{\max}$ and dominates the function $f_{\min}$ defined by the following equations:
\[-\dot{f}_{\mathsf{opt}}(l,t) = \hspace*{-3mm}
{\begin{array}{c}
\vspace*{-4pt} \\
\mathsf{opt} \vspace*{-4pt}\\ \scriptsize \mbox{$a\in\act(l)$}
\end{array}}\sum_{l'\in\locations} \ratematrix(l,a,l') \cdot \left(f_{\mathsf{opt}}(l',t) -f_{\mathsf{opt}}(l,t)\right)\hfill \mbox{for }t\in [0,t_{\max}],\]
where $\mathsf{opt}\in \{\min,\max\}$. This intuitive result is not hard to prove%
\footnote{The systems of non-linear ordinary differential equations used in this paper are all quite obvious, and the challenge is to prove that they can be \emph{taken} and not merely approximated.
An approximative argument for these ODE's goes back to Bellman \cite{Bellman/57/DP}, but he uses a less powerful set of schedulers, and only proves that $f_{\max}$ and $f_{\min}$ can be approximated from below and above, respectively, claiming that the other direction is obvious.
After starting with a similar claim, we were urged to include a full proof.}.

\begin{lemma}\label{lem:stupid}
The reachability probability of any measurable THR scheduler is dominated by the function $f_{\max}$ and dominates the function $f_{\min}$. 
\end{lemma}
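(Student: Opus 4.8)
The plan is to establish the two inequalities by a verification (super-/sub-solution) argument: I treat $f_{\max}$ and $f_{\min}$ as the Lipschitz (indeed $C^1$) solutions of their defining \textsf{opt}-equations, whose existence follows from Picard--Lindel\"of since the right-hand side is a maximum/minimum of finitely many expressions that are linear, hence Lipschitz, in the unknown values. I would first reduce the general case to cylindrical schedulers: since every measurable $\S$ is by construction a limit of a Cauchy sequence of cylindrical schedulers $\S_n$, and since $\prob_{\S}^{\M}(l,t)$ depends continuously on the scheduler in this completion, the pointwise bounds $f_{\min}(l,t)\le\prob_{\S_n}^{\M}(l,t)\le f_{\max}(l,t)$ pass to the limit. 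Randomisation I dispose of by convexity: for any distribution $\mu$ over $\act(l)$, the quantity $\sum_{a}\mu(a)\sum_{l'}\ratematrix(l,a,l')\big(\phi(l')-\phi(l)\big)$ lies between $\min_{a}$ and $\max_{a}$ of the same expression, so a randomised choice can never beat the optimal pure one.

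For a fixed cylindrical THR scheduler $\S$, the core is to run the controlled process $X_s$ forward from $(l,t)$ to $t_{\max}$ and show that $s\mapsto f_{\max}(X_s,s)$ is a supermartingale. Along the solution the defining equation gives $\partial_s f_{\max}(l,s)=-\max_{a}\sum_{l'}\ratematrix(l,a,l')\big(f_{\max}(l',s)-f_{\max}(l,s)\big)$, while the generator of the process under the scheduler's (history-dependent, randomised) choice $\mu_s$ contributes $\sum_{a}\mu_s(a)\sum_{l'}\ratematrix(l,a,l')\big(f_{\max}(l',s)-f_{\max}(l,s)\big)$, which is at most this maximum; hence the drift $\partial_s f_{\max}+\mathcal{L}^{\mu_s}f_{\max}$ is nonpositive. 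Dynkin's formula -- which applies because the rates are bounded, so the number of jumps in $[t,t_{\max}]$ is dominated by a Poisson count of finite mean, and because $f_{\max}\in[0,1]$ -- then yields $E_{l,t}\big[f_{\max}(X_{t_{\max}},t_{\max})\big]\le f_{\max}(l,t)$. Since the goal region is absorbing, the left-hand side equals $\prob_{\S}^{\M}(l,t)$ and the terminal value of $f_{\max}$ is the indicator of $G$, giving $\prob_{\S}^{\M}(l,t)\le f_{\max}(l,t)$. The bound $\prob_{\S}^{\M}(l,t)\ge f_{\min}(l,t)$ follows symmetrically, with $\min$ in place of $\max$ and a submartingale.

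An equivalent route, closer to the differential equations already displayed, is a direct comparison: index the scheduler's value functions by the history cylinders $c$ (ending in a location $l$), so that $\prob_{\S}^{\M}(c,t)=:v_c(t)$ solves $-\dot v_c=\sum_{l'}\ratematrix(l,\S(c,t),l')(v_{c'}-v_c)$ with $c'=c\cdot(a,[t]_\I,l')$. Setting $\delta_c(t)=v_c(t)-f_{\max}(l,t)$ one checks $\dot\delta_c\ge\sum_{l'}\ratematrix(l,a,l')(\delta_c-\delta_{c'})$, so that $\Delta(t)=\sup_c\delta_c(t)$ has nonnegative right derivative; since $\Delta(t_{\max})=0$, it stays $\le 0$ on $[0,t_{\max}]$, which is exactly the desired bound.

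The main obstacle I anticipate is the combination of history dependence with full measurability, which is what makes this more than a one-line ODE comparison. The system of value functions is genuinely infinite-dimensional, since within the bounded horizon a path may still make arbitrarily many transitions, so the ``evaluate at the maximiser'' step of the comparison argument must be carried out for a supremum that need not be attained, and the differentiation under the expectation in Dynkin's formula must be justified for non-cylindrical schedulers. I expect to handle the former by truncating the path length and taking limits (or, more cleanly, by the supermartingale formulation, which never requires selecting a maximiser), and the latter by the continuity of $\prob_{\S}^{\M}$ in the scheduler completion, reducing everything to the cylindrical case where the generator and the integral bound are elementary.
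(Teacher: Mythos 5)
Your proposal is correct, and its core differs genuinely from the paper's own proof. The first reduction is shared: the paper also passes from a measurable THR scheduler to a nearby cylindrical one, using the fact that the scheduler distance bounds the difference in time-bounded reachability probabilities---though where you take limits of closed pointwise inequalities, the paper runs a $3\varepsilon$-contradiction, sacrificing one $\varepsilon$ for the cylindrical approximation; and your convexity argument for randomisation appears there as an inline step in an inequality chain. From that point the routes diverge: the paper sacrifices a second $\varepsilon$ to truncate to at most $n_\varepsilon$ discrete steps, which collapses the history-indexed system to a finite one, and then concludes by backward induction on the history length together with an elementary ODE comparison (at any point where a history's value would touch $f_{\max}$, its derivative cannot let it cross). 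Your primary route, the supermartingale/Dynkin verification argument, replaces both the truncation and the induction: the drift inequality $\partial_s f_{\max}+\mathcal{L}^{\mu_s}f_{\max}\leq 0$ is uniform over randomised, history-dependent choices, so no maximiser need ever be selected and no finite structure is needed, with the Poisson domination of the jump count justifying the optional-stopping step. (One cosmetic caveat: under a TH scheduler the location process alone is not Markov, so the supermartingale property should be stated with respect to the natural filtration of the history process; the compensator computation is unchanged.) What your route buys is brevity and a standard stochastic-control shape; what it costs is that the paper's hand-built measure---a completion of premeasures on cylindrical path sets---does not by itself furnish the filtration/compensator apparatus behind Dynkin's formula, so a fully self-contained proof in this framework would still require constructing the controlled jump process on the canonical path space for a cylindrical scheduler, elementary but exactly the machinery the paper avoids. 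Your fallback comparison via $\Delta(t)=\sup_c\delta_c(t)$ is, once you truncate the path length so that the supremum is attained, essentially the paper's induction in disguise: you correctly identified sup-attainment over the infinite family of histories as the crux, and the paper's $n_\varepsilon$-truncation is precisely its resolution.
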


\begin{proofidea}
To proof this claim for $f_{\max}$, assume that there is a scheduler that provides a better time-bounded reachability probability $\prob_{\S}^{\M}(l,t)>f_{\max}(l,t)$ for some location $l\in \locations$ and time $t\in [0,t_{\max}]$ (in particular for $t=0$), and hence improves over $f_{\max}(l,t)$ at this position by at least $3\varepsilon$ for some $\varepsilon >0$.

$\S$ is a Cauchy sequence of cylindrical schedulers. Therefore we can sacrifice one $\varepsilon$ and get an $\varepsilon$-close cylindrical scheduler from this sequence, which is still at least $2\varepsilon$ better than $f_{\max}$ at position $(l,t)$.

As the measure for this cylindrical scheduler is a Cauchy sequence of measures for sequences with a bounded number of discrete transitions, we can sacrifice another $\varepsilon$ to sharpen the requirement for the scheduler to reach the goal region in time \emph{and} with at most $n_\varepsilon$ steps for an appropriate bound $n_\varepsilon \in \mathbb N$, still maintaining an $\varepsilon$ advantage over $f_{\max}$. Hence, we can compare with a finite structure, and use an inductive argument to show for paths $\pi$ of shrinking length that end in any location $l'\in \locations$ that $f_{\max}(l',t) \leq \prob_{\S}^{\M}(\pi,t)$ holds true.
\end{proofidea}

The full proof is moved to Appendix~\ref{app:proof}.

%
%

\begin{theorem}\label{thm:pureexistence}
For a CTMDP, there is a measurable TPD scheduler $\S$ optimal for maximum time-bounded reachability in the class of measurable THR scheduler.
\end{theorem}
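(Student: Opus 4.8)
The plan is to use Lemma~\ref{lem:stupid} as a ceiling and then \emph{realise} it. Since every TPD scheduler is in particular a measurable THR scheduler, and since no measurable THR scheduler can exceed $f_{\max}$, it suffices to construct a single measurable TPD scheduler $\S$ whose reachability probability equals $f_{\max}$ everywhere; such an $\S$ is then optimal in the claimed class. The natural candidate is the greedy one: at a location $l$ and time $t$, play an action attaining the maximum in the defining equation for $f_{\max}$, i.e.\ maximising the advantage $g_a(l,t):=\sum_{l'\in\locations}\ratematrix(l,a,l')\bigl(f_{\max}(l',t)-f_{\max}(l,t)\bigr)$.

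First I would record the regularity of $f_{\max}$. The right-hand side of its defining system is a maximum of finitely many affine functions of the $f$-values, hence globally Lipschitz; so $f_{\max}$ exists, is unique, and is (Lipschitz-)continuous on $[0,t_{\max}]$. Consequently each $g_a(l,\cdot)$ and the pointwise optimum $g^\ast(l,t):=\max_{a\in\act(l)}g_a(l,t)$ are continuous in $t$, and the optimal-action sets $A_a(l):=\{t\in[0,t_{\max}]\mid g_a(l,t)=g^\ast(l,t)\}$ are closed and cover $[0,t_{\max}]$.

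The heart of the argument is a \emph{measurable selection}. Fixing an arbitrary order $a_1,\dots,a_k$ on $\act$, I would let $\S$ play at $(l,t)$ the first action whose optimal-action set contains $t$; the region on which $\S$ plays $a_j$ is $A_{a_j}(l)\setminus\bigcup_{i<j}A_{a_i}(l)$, a closed set minus a closed set, hence a locally closed, measurable set. This is exactly the ``fix decisions successively on closures of open sets'' device: by continuity of the $g_a(l,\cdot)$ each action stays optimal on the closure of the open region where it strictly dominates, so processing the actions in turn leaves only boundary sets and persistent-tie sets, all measurable. This yields a Borel TPD scheduler $\sigma$; that $\sigma$ lies in the paper's measurable class follows by approximating this bounded, finitely-valued function by cylindrical (finite-partition, piecewise-constant) schedulers whose difference measure to $\sigma$ tends to $0$.

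It then remains to verify $\prob_\sigma^\M=f_{\max}$. By construction $\sigma(l,t)$ attains $g^\ast(l,t)$, so $f_{\max}$ satisfies the single-action equation $-\dot f_{\max}(l,t)=\sum_{l'}\ratematrix(l,\sigma(l,t),l')\bigl(f_{\max}(l',t)-f_{\max}(l,t)\bigr)$ with the boundary data at $t_{\max}$. For a \emph{fixed} measurable $\sigma$ this system is linear in the probability vector with bounded coefficients, so a Carath\'eodory/Gronwall argument gives a unique solution, which is precisely $\prob_\sigma^\M$; hence $\prob_\sigma^\M=f_{\max}$, and with Lemma~\ref{lem:stupid} the scheduler $\sigma$ is optimal. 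I expect the main obstacle to be the interplay between measurability and the differential equation: because $\sigma$ is typically discontinuous in $t$, the greedy equation must be read in the Carath\'eodory (almost-everywhere) sense, and one must confirm that the probability produced by the primitive cylindrical construction genuinely coincides with this solution rather than merely approximating it --- the same ``can be taken, not merely approximated'' subtlety already flagged for $f_{\max}$ itself.
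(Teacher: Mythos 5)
Your proposal follows essentially the same route as the paper: greedily select, at each $(l,t)$, an action maximising the gain $\sum_{l'}\ratematrix(l,a,l')\bigl(f_{\max}(l',t)-f_{\max}(l,t)\bigr)$, make this selection measurable by exploiting continuity of the gain functions in $t$ together with a fixed order on actions, and conclude optimality from Lemma~\ref{lem:stupid}. Your selection device (the first action $a_j$ in the order with $t\in A_{a_j}(l)$, so that the decision regions are differences of closed sets) is a mild and correct simplification of the paper's construction $D_a=\overline{T_a}\cap O_a$ from relatively open strict-dominance sets, and your explicit Carath\'eodory/Gronwall uniqueness argument for $\prob_\sigma^\M=f_{\max}$ merely spells out a verification step that the paper treats as immediate.
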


\begin{proof}
We construct a \emph{measurable} scheduler $S$ that always chooses an action $a$ that maximises $\sum_{l'\in\locations} \ratematrix\big(l,\S(l,t),l'\big) \cdot \left(\prob_{\S}^{\M}(l',t) -\prob_{\S}^{\M}(l,t)\right)$;
by Lemma~\ref{lem:stupid}, this guarantees that $\sum_{l \in \locations} \nu(l) Pr_{\S}^{\M}(l,0) = \sum_{l \in \locations} \nu(l) f_{\max}(l,0) = \sup\limits_{\S \in THR} Pr_{\S}^{\M}(t_{\max})$ holds true.

To construct the scheduler decisions for a location $l$ for a measurable scheduler $\S$, we partition $[0,t_{\max}]$ into measurable sets $\{D_a\mid a \in \act(l)\}$, such that $\S$ only makes decisions that maximise $\sum_{l'\in\locations}{\ratematrix(l,a,l') \cdot \big(\prob_{\S}^{\M}(l',t) - \prob_{\S}^{\M}(l,t)\big)}$.
(For positions outside of $[0,t_{\max}]$, 
the behaviour of the scheduler does not matter. $\S(l,t)$ can therefore be fixed to any constant decision $a \in \act(l)$ for all $l\in \locations$ and $t\notin [0,t_{\max}]$.)

We start with fixing an arbitrary order $\succ$ on the actions in $\act(l)$ and introduce, for each point $t \in [0,t_{\max}]$, an order $\curlyeqsucc_t$ on the actions determined by the value of $\sum_{l'\in\locations}\ratematrix(l,a,l') \cdot \big(f_{\max}(l',t) -f_{\max}(l,t)\big)$, using $\succ$ as a tie-breaker.

Along the order of $\succ$, we construct, starting with the minimal element, for each action $a\in \act(l)$:
\begin{enumerate}
 \item Open sets $O_a$ that contain the positions where our scheduler does \emph{not} make a decision%
\footnote{Note that, for all $a' \prec a$, the points $T_{a'}$ in time where the scheduler does make the decision $a'$ have been fixed earlier by this construction.}
$a' \prec a$.

(We choose $O_a=[0,t_{\max}]$ for the action $a$ that is minimal with respect to $\succ$.)

\item A set $T_a$ that is open in $O_a$ and contains the points in time in $O_a$ where $a$ is maximal with respect to $\curlyeqsucc_t$.
\item A set $D_a = \overline{T_a} \cap O_a$ which is the closure of $T_a$ in $O_a$.

If $a$ is not maximal, we set $O_{a'}=O_a \smallsetminus D_a$ for the successor $a'$ of $a$ with respect to~$\succ$.
\end{enumerate}

To complete the proof, we have to show that the scheduler $\S$, which chooses $a$ for all $t\in D_a$, makes only decisions that maximise the gain, that is $\sum_{l'\in\locations}\ratematrix(l,a,l') \cdot $ $\big(f_{\max}(l',t) - f_{\max}(l,t)\big)$ (and hence that $f_{\max}=\prob_{\S}^{\M}$ holds true), and we have to show that the resulting scheduler is measurable.
As an important lemma on the way, we have to demonstrate the claimed openness of the $O_a$'s and $T_a$'s in the compact Euclidean space $[0,t_{\max}]$.

This openness is provided by a simple inductive argument:
First, the complete space $[0,t_{\max}]$ is open in itself.

Let us assume that $a$ is maximal w.r.t.\ $\curlyeqsucc_t$ for a $t$ in the open set $O_a$.
Then the following holds:  $\sum_{l'\in\locations}{\ratematrix(l,a,l') \cdot \big(f_{\max}(l',t) -f_{\max}(l,t)\big)}$ is \emph{strictly} greater for $a$ compared to the respective value of all other actions $a' \succ a$ (because $\succ$ serves as tie-breaker), and hence this holds for some $\varepsilon$-environment of $t$ that is contained in the open set $O_a$.
For the actions $a' \prec a$ in this $\varepsilon$-environment of $t$, the respective value also cannot be \emph{strictly} greater compared to $a$, because otherwise one of these actions had been selected before.

Note that this argument provides optimality of the choices in $T_a$ as well as openness of $T_a$.
The optimality for the choices at the fringe of $T_a$ (and hence the extension of the optimality argument to $D_a$) is a consequence of the continuity of $f_{\max}$.

As every open and closed set in $\mathbb R$ is (Lebesgue) measurable, $O_a$ (or, to be precise, $O_a\cap (0,t_{\max})$) and $\overline{T_a}$, and hence their intersection $D_a = \overline{T_a} \cap O_a$, are measurable.

Our construction therefore provides us with a \emph{measurable} scheduler, which is optimal, deterministic, and timed positional.
\qed
\end{proof}

%
%
%
%
%

By simply replacing maximisation by minimisation, $\sup$ by $\inf$, and $\max$ by $\min$, we can rewrite the proof to yield a similar theorem for the minimisation of time-bounded reachability, or likewise, for the maximisation of time-bounded safety.

\begin{theorem}\label{thm:puresafety}
For a CTMDP, there is a measurable TPD scheduler $\S$ optimal for minimum time-bounded reachability in the class of measurable THR scheduler.
\end{theorem}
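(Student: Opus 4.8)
The plan is to mirror the proof of Theorem~\ref{thm:pureexistence} step by step, dualising every maximisation into a minimisation; the remark preceding the statement already promises this works, so the real task of a plan is to confirm that no step secretly relied on $\max$. Lemma~\ref{lem:stupid} supplies the easy half for free: the reachability probability $\prob_{\S}^{\M}$ of \emph{every} measurable THR scheduler $\S$ dominates $f_{\min}$, so $f_{\min}$ is a lower bound on the minimum time-bounded reachability. It therefore suffices to construct one measurable TPD scheduler with $\prob_{\S}^{\M}=f_{\min}$; such a scheduler attains the infimum and is optimal in the THR class, yielding $\sum_{l\in\locations}\nu(l)\,\prob_{\S}^{\M}(l,0)=\sum_{l\in\locations}\nu(l)\,f_{\min}(l,0)=\inf_{\S\in THR}\prob_{\S}^{\M}(t_{\max})$.

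To obtain this scheduler I would, at every location $l$ and time $t$, select an action $a$ that \emph{minimises} the gain $\sum_{l'\in\locations}\ratematrix(l,a,l')\cdot\big(f_{\min}(l',t)-f_{\min}(l,t)\big)$, so that $f_{\min}$ solves the differential equation defining $\prob_{\S}^{\M}$ and, by uniqueness, coincides with it. Concretely I reuse the construction of Theorem~\ref{thm:pureexistence} verbatim, changing only the order $\curlyeqsucc_t$: it now ranks the actions in $\act(l)$ by the \emph{smaller} gain value (with $\succ$ as tie-breaker), so that the gain-minimising action becomes the $\curlyeqsucc_t$-maximal one. The inductive definition of the open sets $O_a$, the sets $T_a$ open in $O_a$, and the closures $D_a=\overline{T_a}\cap O_a$ then goes through unchanged, and the scheduler choosing $a$ on each $D_a$ is the candidate.

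The verification is where the work sits. Openness of the $O_a$ and $T_a$ follows from the dual strictness argument: if $a$ minimises the gain at some $t\in O_a$, its gain is \emph{strictly} smaller than that of every $a'\succ a$ (since $\succ$ breaks ties), which is an open condition and hence survives on an $\varepsilon$-environment of $t$ inside $O_a$, while no already-fixed action $a'\prec a$ can beat $a$ there (or it would have captured $t$ earlier). Extending optimality from $T_a$ to its fringe, and hence to $D_a$, rests on the continuity of $f_{\min}$ in place of $f_{\max}$, and measurability of each $D_a$ follows as before, since open and closed subsets of $\mathbb R$ are Lebesgue measurable. Summing over locations gives a measurable, deterministic, timed-positional scheduler realising $f_{\min}$.

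The only genuine obstacle is thus to certify that the two properties the argument really uses are symmetric in $\min$ and $\max$. The first is the continuity of $f_{\min}$, which holds by the same reasoning as for $f_{\max}$: $f_{\min}$ solves the analogous differential equation whose right-hand side is a minimum of finitely many affine functions of the values $f_{\min}(l',t)$ and is therefore Lipschitz, so the solution is continuous exactly as in the maximisation case. The second is that, among finitely many enabled actions, the extremal gain is attained and becomes strict once ties are resolved by $\succ$; this is plainly invariant under swapping $\max$ for $\min$. Since neither ingredient favours maximisation, the dualised construction delivers the claimed optimal scheduler, establishing Theorem~\ref{thm:puresafety}.
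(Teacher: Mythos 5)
Your proposal is correct and follows essentially the same route as the paper, which proves Theorem~\ref{thm:puresafety} precisely by dualising the proof of Theorem~\ref{thm:pureexistence} (replacing maximisation by minimisation, $\sup$ by $\inf$, and $\max$ by $\min$) and invoking the $f_{\min}$ half of Lemma~\ref{lem:stupid}. Your additional verification that the two ingredients the construction actually uses---continuity of $f_{\mathsf{opt}}$ and strictness of the extremal gain after tie-breaking by $\succ$---are symmetric in $\min$ and $\max$ is exactly the check the paper's one-line remark leaves implicit.
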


\section{Finite Optimal Control}\label{sect:finite}

In this section we show that, once the existence of an optimal scheduler is established, we can refine this result to the existence of a \emph{cylindrical} optimal TPD scheduler, that is, a scheduler that changes only finitely many times between different positional strategies.
This is as close as we can hope to get to implementability as optimal points for policy switching are---like in the example from Figure \ref{fig:automatonAndReachability}---almost inevitably irrational.

Our proof of Theorem~\ref{thm:pureexistence} makes a purely topological existence claim, and therefore does not imply that a finite number of switching points suffices.
In principle, this could mean that the required switching points have one or more limit points, and an unbounded number of switches is required to optimise time-bounded reachability.
$x\cdot sin (x^{-1})$ (cf.\ Figure~\ref{fig:limitpoint}) is an example for a continuous function for which the codomain of $0$ has a limit point at $0$, and the right curve of Figure~\ref{fig:limitpoint} shows the derivations for a positional scheduler (black) and a potential comparison with a gain function such that their intersections have a limit point.

\begin{figure}[t]
\includegraphics[width=0.48\columnwidth]{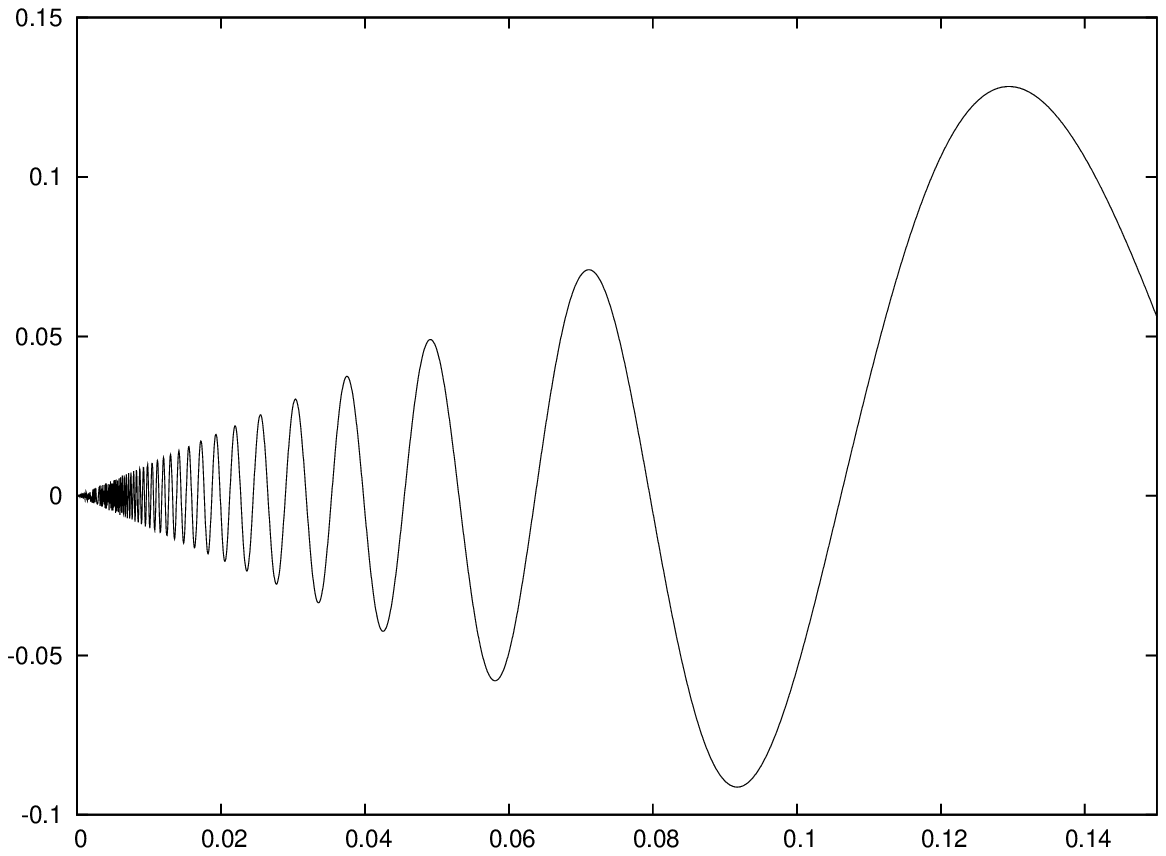}
\includegraphics[width=0.48\columnwidth]{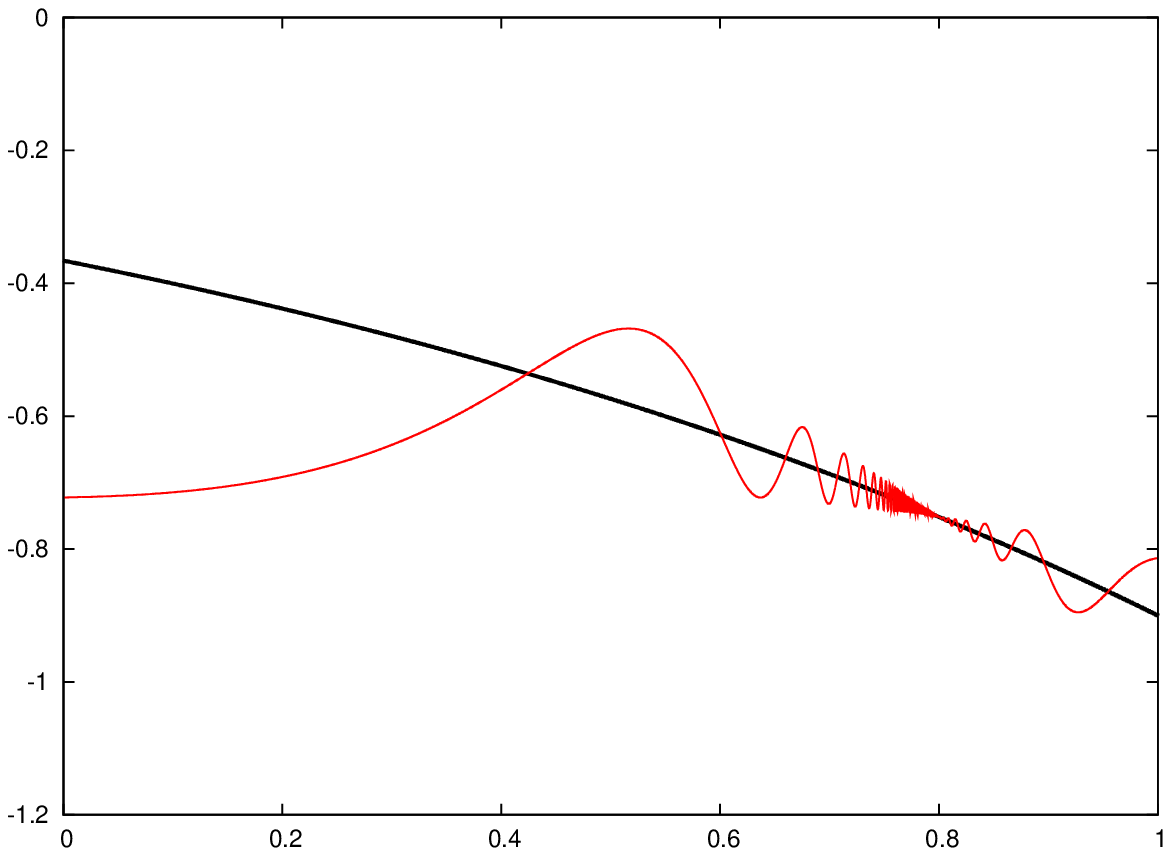}\hfill
\caption{
The plot on the left shows the function $x\cdot\sin(x^{-1})$ as an example for the limit point problem. 
The plot on the right: Theorem \ref{thm:pureexistence} would not exclude that the intersection points of derivatives (the 'loss') of two positional schedulers have a limit point.
}
\label{fig:limitpoint}%
\end{figure}

To exclude such limit points, and hence to prove the existence of an optimal scheduler with a finite number of switching points, we re-visit the differential equations that define the reachability probability, but this time to answer a different question:
Can we use the true values in some point of time to locally find an optimal strategy for an $\varepsilon$-environment?
If yes, then we could exploit the compactness of $[0,t_{\max}]$:
We could, for all points in time $t\in [0,t_{\max}]$, fix a decision that is optimal in an $\varepsilon$-environment of $t$.
This would provide an open set with a positional optimal strategy around each $t\in [0,t_{\max}]$, and hence an open coverage of a compact set, which would imply a final coverage with segments of positional optimal strategies.

While this is the case for most points, this is not necessarily the case at our switching points.
In the remainder of this section, we therefore show something similar:
For every point $t \in [0,t_{\max}]$ in time, there is a positional strategy that is optimal in a left $\varepsilon$-environment of $t$ (that is, in a set $(t-\varepsilon,t]\cap [0,t_{\max}]$), and one that is optimal in a right $\varepsilon$-environment of $t$.
Hence, we get an open coverage of strategies with at most one switching point, and thus obtain a strategy with a finite number of switching points.

\begin{theorem}
\label{theo:finite}
For every CTMDP, there is a cylindrical TPD scheduler $\S$ optimal for maximum time-bounded reachability in the class of measurable THR scheduler.
\end{theorem}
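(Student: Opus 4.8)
The plan is to upgrade the purely topological scheduler of Theorem~\ref{thm:pureexistence} into one that switches only finitely often, by covering the compact interval $[0,t_{\max}]$ with neighbourhoods on each of which a single positional strategy is optimal. Write $f_{\sigma}$ for the reachability probability of a fixed positional strategy $\sigma$ (a choice of one enabled action per location). Since $\locations$ and $\act$ are finite, there are only finitely many such $\sigma$, and each $f_{\sigma}(l,\cdot)$ solves the \emph{linear}, constant-coefficient system $-\dot f_{\sigma}(l,t)=\sum_{l'}\ratematrix(l,\sigma(l),l')\,(f_{\sigma}(l',t)-f_{\sigma}(l,t))$ with terminal value $f_{\sigma}(l,t_{\max})=\mathbb{1}_{G}(l)$; hence every $f_{\sigma}(l,\cdot)$ is real-analytic in $t$. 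By Lemma~\ref{lem:stupid} we moreover have $f_{\sigma}\le f_{\max}$, and $f_{\max}$ is the unique solution of the Bellman ODE whose right-hand side is the pointwise maximum over $a\in\act(l)$ of the gains $g^{l}_{a}(t)=\sum_{l'}\ratematrix(l,a,l')\,(f_{\max}(l',t)-f_{\max}(l,t))$ (uniqueness holding because this right-hand side is Lipschitz in $f$).

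The heart of the argument is a one-sided local optimality claim: for every $t_{0}\in(0,t_{\max}]$ there is a positional strategy $\sigma$ and an $\varepsilon>0$ such that $\sigma$ is instantaneously optimal, i.e.\ $\sigma(l)\in\arg\max_{a}g^{l}_{a}(t)$ for every $l$, at every $t\in(t_{0}-\varepsilon,t_{0}]$, and symmetrically on a right environment $[t_{0},t_{0}+\varepsilon')$. I would prove this by a terminating policy-improvement search anchored at $t_{0}$. Fix the boundary values $f_{\max}(\cdot,t_{0})$ and, for a candidate $\sigma$, let $\phi_{\sigma}$ be the analytic local solution of the $\sigma$-ODE matching these values at $t_{0}$; all candidates agree at $t_{0}$ and differ only in their one-sided jet there. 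Starting from any instantaneously optimal $\sigma$, if some location $l$ admits an action $a'$ whose gain (computed with $\phi_{\sigma}$) strictly exceeds that of $\sigma(l)$ on a left environment of $t_{0}$, I switch $\sigma(l):=a'$. Because each $\phi_{\sigma}$ is analytic, the difference of two competing gains is analytic and hence has a definite sign on a one-sided punctured neighbourhood of $t_{0}$ (read off from its first non-vanishing derivative), so each comparison is decidable and ties are broken coherently rather than oscillating. A standard policy-improvement estimate shows that each such switch strictly increases the one-sided jet of the value at $t_{0}$; since there are only finitely many positional strategies, the search terminates, and the terminal $\sigma$ is instantaneously optimal throughout a left environment. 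As its gains then realise the Bellman right-hand side on $(t_{0}-\varepsilon,t_{0}]$ while $\phi_{\sigma}$ matches $f_{\max}$ at $t_{0}$, uniqueness of the Bellman ODE forces $\phi_{\sigma}=f_{\max}$ there, so $\sigma$ is genuinely optimal on that environment; the right environment is handled identically.

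With one-sided local optimality in hand, the assembly is short. Around each $t_{0}\in[0,t_{\max}]$ the left and right environments form an open interval $(t_{0}-\varepsilon,t_{0}+\varepsilon')$ carrying two positional strategies that meet only at $t_{0}$. These intervals cover the compact set $[0,t_{\max}]$, so a finite subcover exists; its finitely many centres induce a finite partition $\I$ of $[0,t_{\max}]$ on whose intervals a single positional strategy is optimal. Stitching these positional strategies together defines a cylindrical TPD scheduler $\S$; since on each interval its value coincides with $f_{\max}$, and the pieces agree at the finitely many endpoints by continuity, $f_{\S}=f_{\max}$ throughout, so $\S$ attains the optimum of Theorem~\ref{thm:pureexistence}.

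The main obstacle is exactly the local optimality claim, which is what rules out the limit-point pathology of Figure~\ref{fig:limitpoint}: instantaneous optimality at a single $t_{0}$ does not, on its face, persist to any neighbourhood, and the optimal action could in principle flip infinitely often as $t\to t_{0}$. The crucial leverage is that, once a positional strategy is fixed, the induced value function and hence all gain functions are \emph{analytic}, so competing gains either coincide identically or separate with a definite sign on one side of $t_{0}$; this dichotomy, together with the finiteness of the strategy space and a monotone policy-improvement potential, converts pointwise optimality into one-sided interval optimality and thereby excludes accumulating switching points. Pinning down a strictly monotone improvement potential, so that the search cannot cycle among the finitely many strategies, is the delicate point I would need to handle with care.
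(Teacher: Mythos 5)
Your proposal is correct and follows essentially the same route as the paper's proof: anchoring the linear constant-coefficient ODEs of each positional strategy at the support values $f_{\max}(\cdot,t_0)$, using analyticity to obtain one-sided sign-definite comparisons (the paper invokes the identity theorem for holomorphic functions where you read off the first non-vanishing derivative of the jet), running a terminating policy improvement over the finitely many positional strategies to obtain strategies optimal on left and right $\varepsilon$-environments, and concluding by compactness of $[0,t_{\max}]$ via a finite subcover. The \emph{delicate point} you flag---a strictly monotone improvement potential that excludes cycling---is exactly what the paper's $l$-better/preferable/better relation supplies: switching to the competing actions in precisely the set of locations where the other strategy is $l$-better yields a strictly better strategy in the pointwise order on the one-sided value comparisons, so the improvement search on the finite strategy space terminates.
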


\begin{proof}
We have seen that the true optimal reachability probability is defined by a system of differential equations.
In this proof we consider the effect of starting with the `correct' values for a time $t\in [0,t_{\max}]$, but
\emph{locally fix a positional strategy} for a small left or right $\varepsilon$-environment of $t$.
That is, we consider only schedulers that keep their decision constant for a (sufficiently) small time $\varepsilon$ before or after~$t$.

Given a CTMDP $\M$, we consider the differential equations that describe the development near the support point $f_{\max}(l,t)$ for each location $l$ under a positional strategy~$D$:
\[
-\dot{\prob}{}^D_l(\tau) = \sum_{l'\in\locations} \ratematrix(l,a_l,l') \cdot \big(\prob_{l'}^D(\tau) - \prob_{l}^D(\tau)\big),\vspace{-1mm}
\]
where $a_l$ is the action chosen at $l$ by $D$ 
(see Figure~\ref{fig:developmentAtSwitchingPoint} for an example).
 
\begin{figure}[t]
\hspace{-.6cm}
\rput(4.6,.8){$t'$}
\includegraphics[width=0.52\columnwidth]{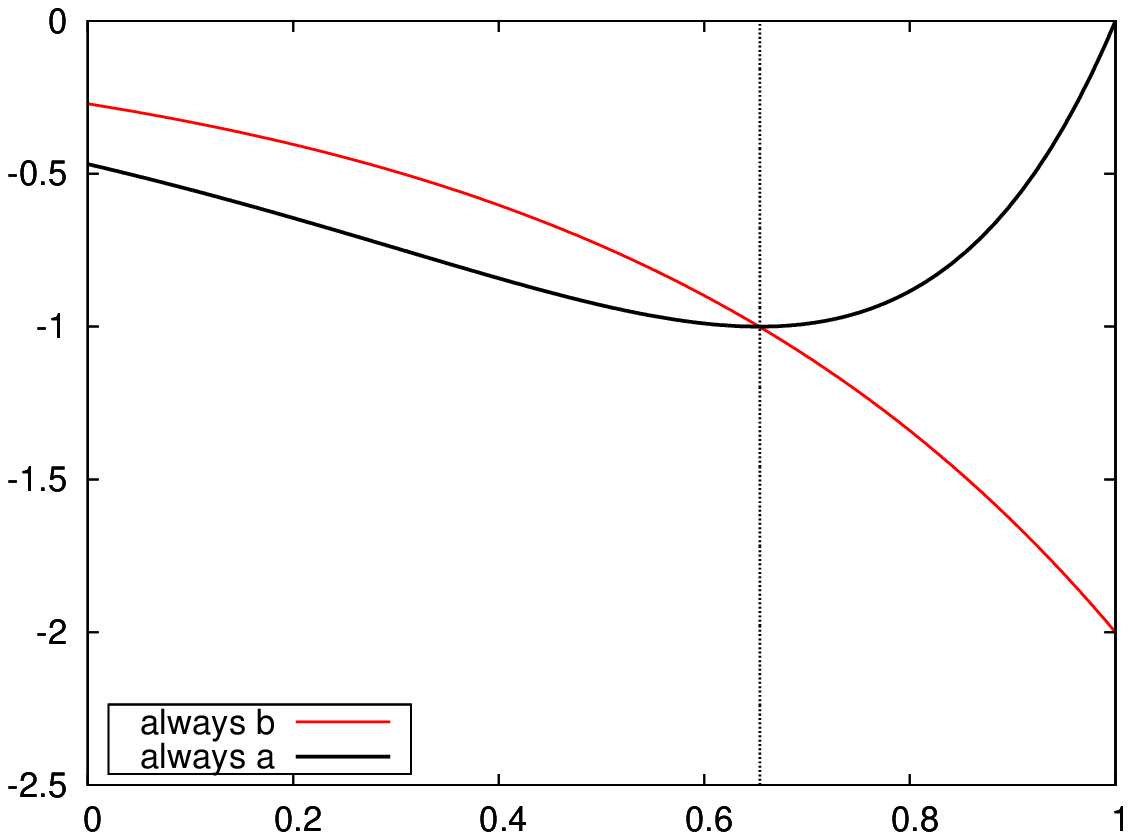}%
\includegraphics[width=0.52\columnwidth]{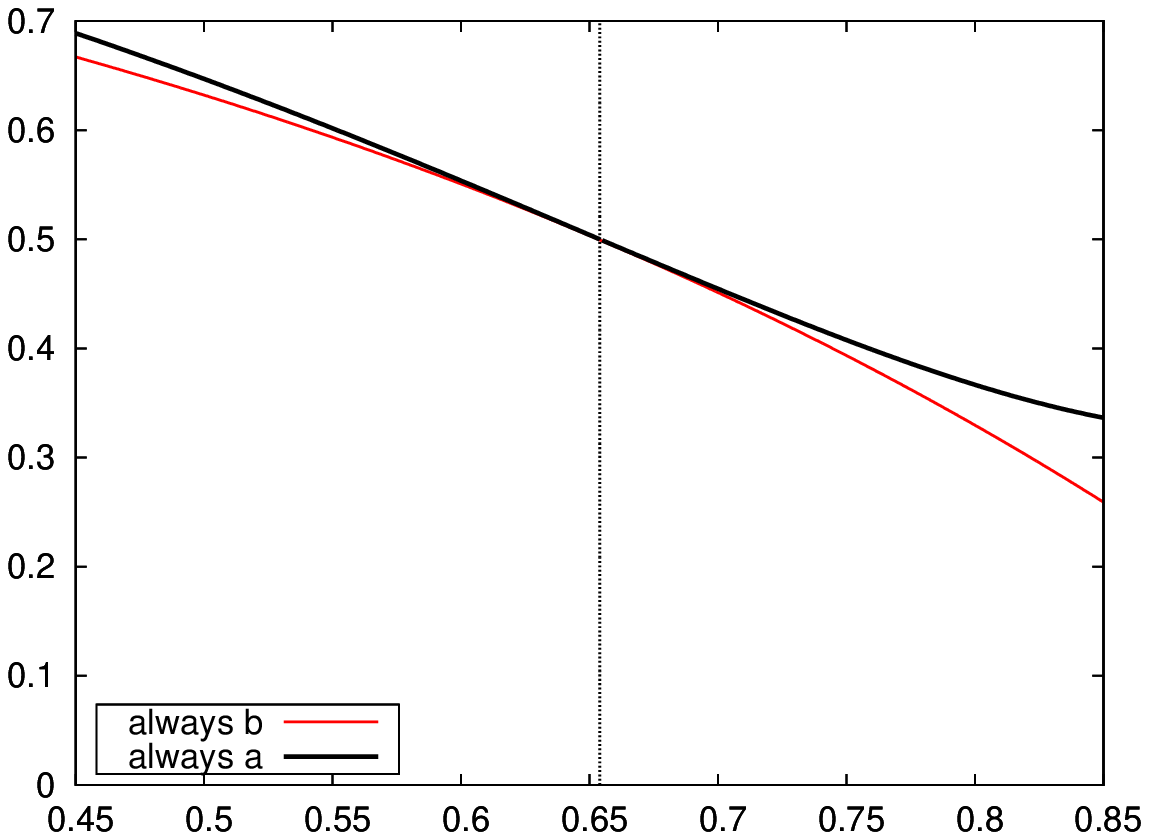}
\caption{The gain functions of the competing stationary strategies of Figure~\ref{fig:automatonAndReachability}. To the left: Developed \emph{gains} $\dot{\prob}_A^{\S_a/\S_b}(t)$ from $t=t_{\max}=1$ in order to find the only switching point $t'=t_{\max}-\frac{1}{2}\log(2)$. To the right: Developed values $\prob_A^{\S_a/\S_b}(t)$ (not gains) from $t'$ in both directions, to show that action $a$ is better for all $t<t'$ whereas $b$ is better for all $t>t'$. This construction is also used to determine the existence of $\epsilon$-environments with a stable strategy around every point~$t$.
}
\label{fig:developmentAtSwitchingPoint}
\end{figure}

Different to the development of the true probability, the development of these linear differential equations provides us with smooth functions.
This provides us with more powerful techniques when comparing two locally positional strategies:
Each deterministic scheduler defines a system $\dot{y} = A y$ of ordinary homogeneous linear differential equations with constant coefficients.

As a result, the solutions $\prob^D_l(\tau)$ of these differential equations---and hence their differences $\prob^{D'}_l(\tau)-\prob^D_l(\tau)$---can be written as finite sums $\sum_{i=1}^n P_i(\tau) e^{\lambda_i \tau}$, where $P_i$ is a polynomial and the $\lambda_i$ may be complex.
Consequently, these functions are holomorphic.

Using the identity theorem for holomorphic functions, $t$ can only be a limit point of the set of $0$ points of $\prob^{D'}_l(\tau)-\prob^D_l(\tau)$ if $\prob^{D'}_l(\tau)$ and $\prob^D_l(\tau)$ are identical on an $\varepsilon$-environment of $t$.
The same applies to their derivations: $\dot{\prob}{}^{D'}_l(\tau)-\dot{\prob}{}^D_l(\tau)$ either has no limit point in $t$, or $\dot{\prob}{}^{D'}_l(\tau)$ and $\dot{\prob}{}^D_l(\tau)$ are identical on an $\varepsilon$-environment of $t$.

For the remainder of the proof, we fix, for a given time $t$, a sufficiently small $\varepsilon>0$ such that, for each pair of schedulers $D$ and $D'$ and every location $l\in\locations$, 
$\dot{\prob}{}^{D'}_l(\tau)-\dot{\prob}{}^{D}_l(\tau)$ is either $<0$, $=0$, or $>0$ on the complete interval $L_\varepsilon^t=(t-\varepsilon,t)\cap [0,t_{\max}] \ni \tau$, and, possibly with different sign, for the complete interval~\mbox{$R_\varepsilon^t=(t,t+\varepsilon)\cap [0,t_{\max}] \ni \tau$}.

We argue the case for the left $\varepsilon$-environment $L_\varepsilon^t$.
In the `$>$' case for a location $l$, we say that $D$ is \emph{$l$-better} than $D'$.
We call $D$ \emph{preferable} over $D'$ if $D'$ is not $l$-better than $D$ for any location $l$, and \emph{better} than $D'$ if $D$ is preferable over $D'$ and $l$-better for some $l\in \locations$.

If $D'$ is $l$-better than $D$ in exactly a non-empty set $\locations_b\subset\locations$ of locations, then we can obviously use $D'$ to construct a strategy $D''$ that is better than $D$ by switching to the strategies of $D'$ in exactly the locations $\locations_b$.

Since we choose our strategies from a finite domain---the deterministic positional schedulers---this can happen only finitely many times.
Hence we can stepwise \emph{strictly} improve a strategy, until we have constructed a strategy $D_{\max}$ preferable over all others.

By the definition of being preferable over all other strategies, $D_{\max}$ satisfies
$$-\dot{\prob}{}^{D_{\max}}_l(\tau) = \max_{a\in\act(l)}\sum_{l'\in\locations} \ratematrix(l,a,l') \cdot \big(\prob_{l'}^{D_{\max}}(\tau) - \prob_{l}^{D_{\max}}(\tau)\big)$$
for all $\tau \in L_\varepsilon^t$ and all $l\in \locations$.

We can use the same method for the right $\varepsilon$-environment $R_\varepsilon^t$, and pick the decision for $t$ arbitrarily;
we use the decision from the respective left $\varepsilon$ environment.

Now we have fixed, for an $\varepsilon$-environment of an arbitrary $t\in [0,t_{\max}]$, an optimal scheduler with at most one switching point.
As this is possible for all points in $[0,t_{\max}]$, the sets $I_\varepsilon^t = L_\varepsilon^t \cup R_\varepsilon^t$ define an open cover of $[0,t_{\max}]$.
Using the compactness of $[0,t_{\max}]$, we infer a finite sub-cover, which establishes the existence of a strategy with a finite number of switching points.
\qed
\end{proof}

The proof for the minimisation of time-bounded reachability (or  maximisation of time-bounded safety) runs accordingly.

\begin{theorem}
\label{theo:finitea}
For every CTMDP, there is a cylindrical TPD scheduler $\S$ optimal for minimal time-bounded reachability in the class of measurable THR scheduler.
\end{theorem}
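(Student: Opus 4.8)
The plan is to run the proof of Theorem~\ref{theo:finite} essentially verbatim, applying the single systematic substitution of minimisation for maximisation (and $\inf$ for $\sup$, $f_{\min}$ for $f_{\max}$). From Theorem~\ref{thm:puresafety} we already have a measurable TPD scheduler realising $f_{\min}$, so the only thing left to establish is that the switching points of an optimal strategy cannot accumulate, exactly as in the maximisation case.

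First I would set up, for every positional strategy $D$, the same homogeneous linear system $\dot{y}=Ay$ that describes the development of the support values $f_{\min}(l,t)$ under $D$. This system is unchanged by the min/max choice, since the equation depends only on the fixed action $a_l$ that $D$ picks at each $l$; hence its solutions $\prob^D_l(\tau)$ and their pairwise differences are again of the form $\sum_{i=1}^{n}P_i(\tau)e^{\lambda_i\tau}$ and therefore holomorphic. Applying the identity theorem for holomorphic functions as before, I would conclude that $t$ is a limit point of the zeros of $\dot{\prob}^{D'}_l(\tau)-\dot{\prob}^{D}_l(\tau)$ only if the two functions coincide on a full $\varepsilon$-environment of $t$. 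This lets me fix, for each $t$, a sufficiently small $\varepsilon>0$ so that for every pair $D,D'$ and every location $l$ the difference of gain-derivatives keeps a constant sign on $L_\varepsilon^t$ and (possibly with the opposite sign) on $R_\varepsilon^t$.

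The only genuinely new bookkeeping is the direction of the preference relation: for minimisation I would call $D$ \emph{$l$-better} than $D'$ when $D$ yields the \emph{smaller} gain at $l$ on the environment (the opposite sign convention to the maximisation proof), and \emph{preferable} and \emph{better} accordingly. With this flipped convention the stepwise strict-improvement argument is identical: whenever $D'$ is $l$-better than $D$ on a non-empty location set $\locations_b\subset\locations$, I switch $D$ to the decisions of $D'$ on $\locations_b$ to obtain a strictly better strategy, and since the deterministic positional schedulers form a finite set this terminates in a strategy $D_{\min}$ preferable over all others. By construction $D_{\min}$ satisfies $-\dot{\prob}^{D_{\min}}_l(\tau)=\min_{a\in\act(l)}\sum_{l'\in\locations}\ratematrix(l,a,l')\big(\prob^{D_{\min}}_{l'}(\tau)-\prob^{D_{\min}}_{l}(\tau)\big)$ on $L_\varepsilon^t$, and likewise on $R_\varepsilon^t$, so it is locally optimal with at most one switch on $I_\varepsilon^t=L_\varepsilon^t\cup R_\varepsilon^t$. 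Finally I would invoke compactness exactly as in Theorem~\ref{theo:finite}: the sets $I_\varepsilon^t$ cover $[0,t_{\max}]$, a finite subcover exists, and gluing the finitely many locally positional pieces yields a cylindrical TPD scheduler with finitely many switching points that realises $f_{\min}$ and is thus optimal in the class of measurable THR schedulers.

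The main obstacle is one of verification rather than of new ideas: one must confirm that the dualisation is genuinely symmetric, i.e.\ that the holomorphicity and identity-theorem step is insensitive to the min/max choice (it is, since it concerns only the fixed-strategy ODEs), and that flipping the sign convention in the definition of \emph{$l$-better} preserves both the strict-improvement (termination) argument and the derivation of the $\min$-equation for $D_{\min}$.
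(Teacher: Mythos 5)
Your proposal is correct and matches the paper exactly: the paper proves this theorem by stating that the proof of Theorem~\ref{theo:finite} ``runs accordingly'' under the min/max dualisation, which is precisely the substitution you carry out. Your additional verification points---that the holomorphicity and identity-theorem step concerns only the fixed-strategy linear ODEs and is thus unaffected by the optimisation direction, and that flipping the sign convention in \emph{$l$-better} preserves both termination of the strict-improvement loop and the derivation of the $\min$-equations for $D_{\min}$---are exactly the checks the paper leaves implicit.
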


\section{Discrete Locations}
\label{sec:discrete}

In this section, we treat the mildly more general case of single player CTMGs, which are traditional CTMDPs plus discrete locations.
We reduce the problem of finding optimal measurable schedulers for CTMGs first to \emph{simple CTMGs}, CTMGs whose discrete locations have no incoming transitions from continuous locations.
(They hence can only occur initially at time $0$.)
The extension from CTMDPs to simple CTMGs is trivial.

\begin{lemma}
\label{lem:ssG}
For a simple single player CTMG with only a reachability (or only a safety) player, there is an optimal deterministic scheduler with finitely many switching points.
\end{lemma}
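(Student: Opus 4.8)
The plan is to reduce the simple single-player CTMG to the CTMDP case already settled by Theorem~\ref{theo:finite}, and to dispose of the discrete locations by a finite backward-induction argument. The key structural observation is that, in a \emph{simple} CTMG, the continuous locations $\locations_c$ form a self-contained CTMDP: since discrete locations have no incoming transitions from continuous locations, every transition leaving a continuous location again enters a continuous (or goal) location, so the continuous dynamics are literally those of a CTMDP. The discrete locations $\locations_d$, on the other hand, are reachable only at time~$0$, and because cycles of discrete locations are forbidden they form a finite directed acyclic graph whose transitions all fire in zero time and whose sinks hand control over to continuous locations.

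First I would apply Theorem~\ref{theo:finite} to this continuous sub-CTMDP. This yields a cylindrical TPD scheduler $\S_c$ that is optimal for maximum time-bounded reachability; moreover, exactly as in the proof of Theorem~\ref{thm:pureexistence} (where the constructed optimal scheduler satisfies $f_{\max}=\prob_{\S}^{\M}$), it attains the optimal value $f_{\max}(l,t)$ simultaneously for \emph{every} continuous location $l\in\locations_c$ and \emph{every} time $t\in[0,t_{\max}]$, not merely for the initial distribution. In particular we obtain the correct boundary values $f_{\max}(l,0)$ at all continuous locations, including those reached as successors of discrete locations at time~$0$.

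Next I would extend $f_{\max}(\cdot,0)$ to the discrete locations by backward induction along their finite DAG, setting
\[
f_{\max}(l,0) = \max_{a\in\act(l)} \sum_{l'\in\locations} \probabilitymatrix(l,a,l') \cdot f_{\max}(l',0)
\]
for each $l\in\locations_d$, and fixing as the decision of $\S$ at $l$ a maximising action $a$ (the same for all times, since $l$ is reachable only at $t=0$). The scheduler $\S$ then combines $\S_c$ on the continuous locations with these constant decisions on the discrete locations; it is deterministic, timed positional, and, since the discrete decisions are constant in time and thus introduce no switching points, it retains the \emph{finite} number of switching points of $\S_c$.

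It remains to argue optimality, namely that $\sum_l\nu(l)\prob_{\S}^{\M}(l,0)=\sum_l\nu(l)f_{\max}(l,0)$ equals the supremum over all measurable THR schedulers. The upper bound is the extension of Lemma~\ref{lem:stupid} to simple CTMGs: on continuous locations it holds verbatim because their dynamics are unchanged, and on the zero-time discrete DAG the elementary finite dynamic-programming inequality shows that no scheduler can exceed $f_{\max}(l,0)$. That $\S$ \emph{attains} this bound follows because $\S_c$ attains $f_{\max}$ on the continuous part and $\S$ selects a maximising action at each discrete location. The only step requiring care is this extension of the domination lemma across the instantaneous discrete transitions; but since the discrete part is a finite acyclic structure evaluated entirely at time~$0$, it is exactly the backward induction of finite-horizon MDPs and introduces no new analytic difficulty, consistent with the remark that the extension from CTMDPs to simple CTMGs is trivial. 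The safety (minimisation) case is identical, invoking Theorem~\ref{theo:finitea} and replacing $\max$ by $\min$ throughout.
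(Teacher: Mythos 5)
Your proof is correct and takes essentially the same route as the paper's: apply Theorem~\ref{theo:finite} (resp.\ Theorem~\ref{theo:finitea}) to the continuous locations, which in a simple CTMG form a self-contained CTMDP, and then fix deterministic, time-constant optimal choices at the discrete locations by backward induction over their finite acyclic structure at time $t=0$, which adds no switching points. Your extra care in extending the domination bound of Lemma~\ref{lem:stupid} across the instantaneous discrete transitions merely spells out what the paper treats as the trivial part of the argument.
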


\begin{proof}
By the definition of simple single player games, the likelihood of reaching the goal location from any continuous location and any point in time is independent of the discrete locations and their transitions. For continuous locations, we can therefore simply reuse the results from the Theorems \ref{theo:finite} and \ref{theo:finitea}.

We can only be in discrete locations at time $0$, and for every continuous location $l$ there is a fixed time-bounded reachability probability described by $f_{\mathsf{opt}}(l,0)$.
We can show that there is a timed-positional (even a positional) deterministic optimal choice for the discrete locations at time $t=0$ by induction over the maximal distance to continuous locations:
If all successors have been evaluated, we can fix an optimal timed-positional choice. We can therefore use discrete positions with maximal distance $1$ as induction basis, and then apply an induction step from positions with distance $\leq n$ to positions with distance $n+1$.
\qed
\end{proof}

Rebuilding a single player CTMG $\G$ to a \emph{simple} single player CTMG $\G_s$ can be done in a straight forward manner; it suffices to pool all transitions taken between two continuous locations.
To construct the resulting simple CTMG $\G_s$, we add new continuous locations for each possible time abstract path from continuous locations of the CTMG $\G$, and we add the respective actions:
For continuous locations $l_c,l_c'\in\locations_c$ and discrete locations $l_1^d,\ldots,l_n^d\in \locations_d$ a timed path
$l_c\xrightarrow{a_0,t} l_1^d \xrightarrow{a_1,t} l_2^d \cdots ~ l_n^d \xrightarrow{a_{n},t}l_c'$ translates to
$l_c\xrightarrow{\mathbf a,t} \underline{\xrightarrow{a_0} l_1^d \xrightarrow{a_1} l_2^d \cdots ~ l_n^d \xrightarrow{a_n}l_c'}$,
where the underlined part is a new continuous location.
(For simplicity, we also translate a timed path $l_c\xrightarrow{a,t} l_c'$ to $l_c\xrightarrow{\mathbf a,t}\underline{\xrightarrow{a} l_c'}$.)

The new \emph{actions} of the resulting simple single player CTMG encode the sequences of actions of $\G$ that a scheduler could make in the current location plus in all possible sequences of discrete locations, until the next continuous location is reached.
(Note that this set is finite, and that the scheduler makes all of these transitions at the same point of time.) If $\mathbf a$ encodes choices that depend only on the position (but not on this local history), $\mathbf a$ is called positional.
For continuous locations, all old actions are deleted, and all new continuous locations that end in a location $l_c\in \locations_c$ get the same outgoing transitions as $l_c$.
The rate matrix is chosen accordingly.

Adding the information about the path to locations allows to reconstruct the timed history in the single player CTMG from a history in the constructed simple CTMG.

\begin{theorem}
\label{theo:discrete}
For a single player CTMG $\G$ with only a reachability (or only a safety) player, there is an optimal deterministic scheduler with finitely many switching points.
\end{theorem}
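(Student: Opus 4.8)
The plan is to prove the theorem by the reduction to \emph{simple} single player CTMGs sketched above, using Lemma~\ref{lem:ssG} as the workhorse and translating its scheduler back to $\G$. Given a single player CTMG $\G$ with only a reachability player (the safety case is symmetric), I would first carry out the construction of the simple CTMG $\G_s$ in full: since $\G$ contains no cycles of discrete locations, every maximal sequence of discrete transitions taken between two continuous locations is finite, and there are only finitely many such time-abstract sequences. Pooling each of them into one new continuous location and one macro-action $\mathbf a$ therefore yields a \emph{finite} simple single player CTMG $\G_s$ whose discrete locations occur only initially, and in which all locations still belong to the reachability player, so that Lemma~\ref{lem:ssG} will be applicable.

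The heart of the argument is a value-preserving correspondence between the measurable THR schedulers of $\G$ and those of $\G_s$. Because discrete locations are passed in zero time, collapsing a run of $\G$ along a discrete path into the matching macro-transition of $\G_s$ changes neither the points of time at which continuous locations are entered nor the sojourn distributions there, and hence does not change the time-bounded reachability probability $\prob_{\S}^{\M}(t)$. Conversely, a scheduler on $\G_s$ unfolds to a scheduler on $\G$ with the same probability, because the macro-actions of $\G_s$ enumerate \emph{exactly} the finitely many deterministic action-sequences that a $\G$-scheduler can realise through a maximal discrete path taken at a single instant. I would verify this equality directly on the cylindrical probability space of the preliminaries, checking in particular that the zero-time factors (the $p_3=0$ convention for paths ending in discrete locations) make the two measures agree; from this it follows that the suprema of $\prob_{\S}^{\M}(t)$ over the measurable THR classes of $\G$ and of $\G_s$ coincide.

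With value preservation in hand, I would apply Lemma~\ref{lem:ssG} to $\G_s$ to obtain an optimal deterministic scheduler with finitely many switching points, and then translate it back along the correspondence. A timed-positional macro-decision taken at a continuous location $l_c$ at time $t$ unfolds into the sequence of discrete decisions that $\mathbf a$ encodes, all made at the same instant; the path information stored in the names of the continuous locations of $\G_s$ is precisely what lets us reconstruct, for each partial discrete history in $\G$, which action to take. The resulting $\G$-scheduler is deterministic and achieves the optimal value, and it inherits finitely many switching points: each switch of a macro-action corresponds to at most finitely many switches of its underlying component decisions, and the $\G_s$-scheduler switches only finitely often.

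I expect the main obstacle to be the value-preservation step, that is, showing rigorously that the supremum over the \emph{full} measurable THR class is preserved under pooling. The delicate points are that the correspondence must respect the completion of the cylindrical probability space (so that Cauchy sequences of cylindrical schedulers are mapped to Cauchy sequences on the other side), and that the zero-time semantics of discrete locations is handled so that no reachability probability is created or lost at the instant a discrete path is traversed. A secondary subtlety worth flagging is that the back-translated scheduler is in general only timed-positional at continuous locations, while its discrete decisions depend on the (bounded, zero-time) local history encoded in the $\G_s$-locations; this is exactly why the statement claims only a deterministic scheduler with finitely many switching points rather than a timed-positional one.
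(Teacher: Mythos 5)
Your proposal is correct and follows essentially the same route as the paper: reduce $\G$ to the simple CTMG $\G_s$ by pooling maximal discrete sequences into fresh continuous locations and macro-actions, invoke Lemma~\ref{lem:ssG} on $\G_s$, and translate the optimal scheduler back. The one substantive divergence is where the effort goes. You invest in a two-way value-preserving correspondence on the completed probability space, whereas the paper treats the translation $\S \mapsto \S_s$ as immediate and instead \emph{massages} the optimal $\G_s$-scheduler before translating back: it first replaces each macro-action $\mathbf{a}$ by a \emph{positional} macro-action $\mathbf{a}'$ (by the same induction over the distance to continuous locations as in Lemma~\ref{lem:ssG}), and then exploits $f_{\mathsf{opt}}(l_c,t) = f_{\mathsf{opt}}(\underline{\ldots \xrightarrow{a} l_c},t)$ to let the decision at every location $\underline{\ldots \xrightarrow{a} l_c}$ coincide with the one chosen for $l_c$, both steps preserving the defining (differential) equations. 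This exposes a small inaccuracy in your closing remark: without that harmonisation, the back-translated scheduler's choice at a continuous location $l_c$ of $\G$ depends on the discrete \emph{entry path} encoded in the name of the $\G_s$-location, so it is not timed-positional even at continuous locations; moreover, the dependence is removable, so the theorem's phrasing is not forced by the construction---the paper's $\S_{\mathsf{opt}}''$ is in fact timed-positional, which is what the later lift to games (co-optimal \emph{timed-positional} strategies, cf.\ Theorem~\ref{theo:fingame} and Appendix~\ref{app:reproof}) relies on. None of this invalidates your argument for the statement as given: a cylindrical deterministic scheduler whose within-interval decisions depend only on the location and the bounded zero-time local history is deterministic with finitely many switching points, and your value-preservation step, carried out on cylindrical sets with the $p_3$ convention for discrete locations as you indicate, does establish optimality.
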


\begin{proof}
First, every scheduler $\S$ for $\G$ can be naturally translated into a scheduler of $\S_s$ of $\G_s$, because every timed-path in $\G_s$ defines a timed-path in $\G$;
the resulting time-bounded reachability probability coincides.

Let us consider a cylindrical optimal deterministic scheduler $\S_{\mathsf{opt}}$ for the simple Markov game, and the function $f_{\mathsf{opt}}$ defined by it.
For the actions $\mathbf a$ $\S_{\mathsf{opt}}$ chooses, we can, for each interval in which $\S_{\mathsf{opt}}$, is positional, use an inductive argument similar to the one from the proof of Lemma~\ref{lem:ssG} to show that we can choose a \emph{positional} action $\mathbf a'$ instead.
The resulting cylindrical deterministic scheduler $\S_{\mathsf{opt}}'$ defines the same $f_{\mathsf{opt}}$ (same differential equations).

Clearly, $f_{\mathsf{opt}}(l_c,t) = f_{\mathsf{opt}}(\underline{\ldots \xrightarrow{a}l_c},t)$ holds true.
We use this observation to change $\S_{\mathsf{opt}}'$ to $\S_{\mathsf{opt}}''$ by choosing the action that $\S_{\mathsf{opt}}'$ chooses for $l_c$ for all locations $\underline{\ldots \xrightarrow{a}l_c}$ and at each point of time.
The resulting scheduler $\S_{\mathsf{opt}}''$ is still cylindrical and deterministic, and defines the same $f_{\mathsf{opt}}$ (same differential equations).

$\S_{\mathsf{opt}}''$ is also the mapping of a cylindrical optimal deterministic scheduler for $\G$.
\qed
\end{proof}

\section{Continuous-Time Markov Games}
\label{sec:games}

In this section, we lift our results from single player to general continuous-time Markov games.
In general continuous-time Markov games, we are faced with two players with opposing objectives:
A reachability player trying to maximise the time-bounded reachability probability, and a safety player trying to minimise it---we consider a $0$-sum game.

Thus, all we need to do for lifting our results to games is to show that the quest for optimal strategies for single player games discussed in the previous section can be generalised to a quest for co-optimal strategies---that is, for Nash equilibria---in general games.
To demonstrate this, it essentially suffices to show that it is not important whether we first fix the strategy for the reachability player and then the one for the safety player in a strategy refinement loop, or vice versa.

Let us first assume CTMGs without discrete locations.

\begin{lemma}
\label{lem:epsilon}
Using the $\varepsilon$-environments $I_\varepsilon^t$ from the proof of Theorem~\ref{theo:finite}, we can construct a Nash equilibrium that provides co-optimal deterministic strategies for both players, such that the co-optimal strategies contain at most one strategy switch on $I_\varepsilon^t$.
\end{lemma}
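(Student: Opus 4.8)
The plan is to reuse the strategy-improvement machinery from the proof of Theorem~\ref{theo:finite}, but to alternate improvements between the two players instead of running a single monotone refinement. Fix a point $t$ and work, say, in the left environment $L_\varepsilon^t$ (the argument for $R_\varepsilon^t$ is symmetric). As in Theorem~\ref{theo:finite}, a \emph{joint positional strategy} $D$ assigns one enabled action to every location---owned by whichever player controls it---and, for any fixed $D$, the functions $\prob^D_l(\tau)$ solve exactly the CTMDP differential equations with \emph{all} nondeterminism resolved. Hence the holomorphicity and identity-theorem argument carries over verbatim: since there are only finitely many joint deterministic positional strategies, we may choose $\varepsilon$ small enough that for every pair $D,D'$ and every location $l$ the sign of $\dot{\prob}{}^{D'}_l(\tau)-\dot{\prob}{}^{D}_l(\tau)$ is constant on $L_\varepsilon^t$. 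This yields a well-defined, player-aware notion of local improvement: a switch at a reachability location is improving iff it raises $\dot{\prob}{}_l$, and a switch at a safety location iff it lowers it.

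First I would set up a Hoffman--Karp-style nested loop on the finite set of joint deterministic positional strategies. The inner loop fixes the safety player's decisions and applies the single-player construction of Theorem~\ref{theo:finite}, restricted to $L_\varepsilon^t$, to produce a reachability best response---a choice at the reachability locations after which no unilateral reachability switch is $l$-better for any $l\in\locations$. The outer loop then fixes this best response and asks whether the safety player can strictly improve at any of his locations; if so, it switches there and re-enters the inner loop, and otherwise it halts. At a halting point neither player has an improving unilateral switch, which is precisely the local saddle-point (Nash) condition. Carrying this out independently on $L_\varepsilon^t$ and $R_\varepsilon^t$, and letting the decision at the single point $t$ agree with the left environment, yields co-optimal deterministic strategies that are positional on each side of $t$, and hence switch at most once on $I_\varepsilon^t = L_\varepsilon^t \cup R_\varepsilon^t$.

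The step I expect to be the main obstacle is the termination of the outer loop, which is exactly the order-independence flagged in the surrounding text. The inner loop terminates by the finite-domain strict-improvement argument of Theorem~\ref{theo:finite}. For the outer loop I would show that every \emph{genuine} safety improvement strictly lowers the local value $\prob_l(\tau)$ taken under the accompanying reachability best response; this rests on the monotonicity of the solution operator of the reachability ODE system---the same property that makes the single-player improvement step work---now invoked while the reachability player simultaneously re-optimises. Since there are only finitely many safety strategies and the value strictly decreases at each realised outer step, the loop cannot cycle and must halt. Equivalently, this is the assertion that fixing the reachability strategy first or the safety strategy first in the refinement loop leads to the same local value, which is what certifies that the resulting fixed point is a genuine Nash equilibrium rather than an artefact of the improvement order; establishing this strict monotone decrease carefully, in the presence of a simultaneously best-responding opponent, is the crux of the argument.
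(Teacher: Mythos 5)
Your overall architecture matches the paper's: decompose a joint positional strategy as $D = S + R$, use Theorem~\ref{theo:finite} to compute a reachability best response $R(S)$ for each fixed safety strategy $S$, and exploit the finiteness of the strategy space together with the sign-constancy of $\dot{\prob}{}^{D'}_l(\tau)-\dot{\prob}{}^{D}_l(\tau)$ on $L_\varepsilon^t$ and $R_\varepsilon^t$. But the step you yourself single out as the crux---that a genuine safety improvement strictly lowers the value taken under the \emph{re-optimised} reachability best response---is exactly the non-trivial content of the lemma, and you leave it as an appeal to ``monotonicity of the solution operator'' rather than proving it. As it stands this is a genuine gap: a switch that improves the safety player's position against the \emph{current} counter-strategy $R(S)$ need not remain an improvement once the reachability player re-optimises, and nothing in Theorem~\ref{theo:finite} (a single-player statement) gives you this for free. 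Without it, your outer loop could in principle cycle, and even a reached fixed point is only certified against unilateral switches evaluated with stale best responses.

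The paper closes precisely this step, and does so without any iteration or termination argument. It defines the quality vector of $S$ as $\bigl(-\dot{\prob}{}^{\M}_{S+R(S)}(l,t+\frac{1}{2}\varepsilon) \mid l \in \locations\bigr)$, picks an $\overline{S}$ whose quality vector is \emph{minimal} in the componentwise partial order---such an element exists trivially by finiteness, so no loop is needed---and then refutes a violated $\min$-equation at some $l \in \locations_s$ by a ``pass control'' argument: restrict the choice at $l$ to the current action $a_1$ and the minimising action $a_2$, hand that choice to the reachability player, and observe that by Theorem~\ref{theo:finite} her best response in this enlarged single-player problem is $a_1$ together with $R(\overline{S})$ (the $\max$-equations hold for $a_1$ but fail for $a_2$); hence fixing $a_2$ for the safety player yields a strictly smaller quality vector, contradicting the minimality of $\overline{S}$. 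Note that if you did prove your one-step strict-decrease claim, it would essentially be this same pass-control argument, and your termination worry would then dissolve of its own accord (a strictly decreasing quality vector over a finite strategy set cannot cycle); conversely, the order-independence you flag as needing proof is, in the paper, a \emph{consequence} of the established $\min$/$\max$ equations rather than a prerequisite for termination. So: right skeleton, but the load-bearing step is asserted rather than established, and the paper's minimal-element formulation shows how to discharge it cleanly.
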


\begin{proof}
We describe the technique to find a constant co-optimal strategy on the right $\varepsilon$-environment $R_\varepsilon^t = (t,t+\varepsilon)\cap [0,t_{\max}]$ of $t$.

We write a constant strategy as $D = S+ R$ that is composed of the actions chosen by the safety player on $\locations_s$, and the actions chosen by the reachability player on $\locations_r$.
For this simple structure, we introduce a strategy improvement technique on the finite domain of deterministic choices for the respective player.

For a fixed strategy $S$ of the safety player, we can find an optimal counter strategy $R(S)$ of the reachability player by applying the technique described in Theorem~\ref{theo:finite}.
(For equivalent strategies, we make an arbitrary but fixed choice.)

We call the resulting vector $(-\dot{\prob}{}^{\M}_{S+R(S)}(l,t+\frac{1}{2}\varepsilon) \mid l \in \locations)$ the \emph{quality vector of $S$}.
Now, we choose an arbitrary $\overline{S}$ for which this vector is minimal.
(Note that there could, potentially, be multiple incomparable minimal elements.)

We now show that the following holds for $\overline{S}$ and all $\tau \in R_\varepsilon^t$:
$$-\dot{\prob}{}^{\M}_{\overline{S}+R(\overline{S})}(l,\tau) = \max_{a\in\act(l)}\sum_{l'\in\locations} \ratematrix(l,a,l') \cdot \big(\prob^{\M}_{\overline{S}+R(\overline{S})}(l',\tau) - \prob^{\M}_{\overline{S}+R(\overline{S})}(l,\tau)\big)$$
for all  $l\in \locations_r$, and
$$-\dot{\prob}{}^{\M}_{\overline{S}+R(\overline{S})}(l,\tau) = \min_{a\in\act(l)}\sum_{l'\in\locations} \ratematrix(l,a,l') \cdot \big(\prob^{\M}_{\overline{S}+R(\overline{S})}(l',\tau) - \prob^{\M}_{\overline{S}+R(\overline{S})}(l,\tau)\big)$$
for all  $l\in \locations_s$.
(Note that the order between the derivation is maintained on the complete right $\varepsilon$-environment $R_\varepsilon^t$.)

The first of these claims is a trivial consequence from the proof of Theorem~\ref{theo:finite}. (The result is, for example, the same if we had a single player CTMDP that, in the locations $\locations_s$ of the safety player, has only one possible action: the one chosen by $\overline{S}$.)

Let us assume that the second claim does not hold. Then we choose a particular $l\in \locations_s$ where it is violated.
Let us consider a slightly changed setting, in which the choices in $l$ are restricted to two actions, the action $a_1$ chosen by $\overline{S}$, and the minimising action $a_2$.
Among these two, one maximises, and one minimises
$$-\dot{\prob}{}^{\M}_{\overline{S}+R(\overline{S})}(l,\tau) = \min_{a\in\{a_1,a_2\}}\sum_{l'\in\locations} \ratematrix(l,a,l') \cdot \big(\prob^{\M}_{\overline{S}+R(\overline{S})}(l',\tau) - \prob^{\overline{S}+R(\overline{S})}(l,\tau)\big).$$

Let us fix all other choices of $S$, and allow the reachability player to choose among $a_1$ and $a_2$ (we `pass control' to the other player).
As shown in Theorem~\ref{theo:finite}, she will select an action that produces the well defined set of $\max$ equations for the resulting single player game.
Hence, choosing $a_1$ and keeping all other choices from $R(\overline{S})$ is the optimal choice for the reachability player in this setting
(as the $\max$ equations are satisfied, while they are dissatisfied for $a_2$).

Consequently, the quality vector for $\overline{S}$ is strictly greater than the one for the adjusted strategy.
That is, assuming that choosing an arbitrary maximal element does not lead to a satisfaction of the $\min$ and $\max$ equations leads to a contradiction.

We can argue symmetrically for the left $\varepsilon$-environment.
Note that the satisfaction of the $\min$ and $\max$ equations implies that it does not matter if we change the r\^ole of the safety and reachability player in our argumentation. \qed
\end{proof}

This lemma can easily be extended to construct simple co-optimal strategies:

\begin{theorem}
\label{theo:fingame}For CTMGs without discrete locations, there are cylindrical deterministic timed-positional co-optimal strategies for the reachability and the safety player.
\end{theorem}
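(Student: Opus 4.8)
The plan is to lift Lemma~\ref{lem:epsilon} from a single $\varepsilon$-environment to the whole interval $[0,t_{\max}]$ by exactly the same compactness argument already used in the proof of Theorem~\ref{theo:finite}. Lemma~\ref{lem:epsilon} gives us, for each $t\in[0,t_{\max}]$, a pair of constant (positional) deterministic strategies $\overline S, R(\overline S)$ that form a local Nash equilibrium on both the left environment $L_\varepsilon^t$ and the right environment $R_\varepsilon^t$, with at most one strategy switch on $I_\varepsilon^t = L_\varepsilon^t\cup R_\varepsilon^t$; that is, the $\min$ equations hold for the safety player's locations and the $\max$ equations hold for the reachability player's locations throughout each half-environment. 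First I would observe that the family $\{I_\varepsilon^t \mid t\in[0,t_{\max}]\}$ is an open cover of the compact set $[0,t_{\max}]$, so a finite subcover exists. This immediately yields a partition of $[0,t_{\max}]$ into finitely many intervals, on each of which both players' strategies are positional, which is precisely the statement that the resulting strategies are cylindrical deterministic timed-positional.

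The key point that still needs an argument is \emph{global} consistency: the local equilibria furnished by Lemma~\ref{lem:epsilon} were each computed by developing the linear differential equations from a support point $f_{\mathsf{opt}}(\cdot,t)$, and I must check that patching them together reproduces the true value function. The crucial observation is that the $\min$/$\max$ equations satisfied locally are exactly the same Bellman-type differential equations, and—as noted in the closing remark of Lemma~\ref{lem:epsilon}—their simultaneous satisfaction means the construction is symmetric in the two players, so it does not matter in which order we fixed the two strategies. Consequently the patched strategy pair satisfies, on the whole of $[0,t_{\max}]$ away from the finitely many switching points, the coupled system
\[
-\dot{\prob}{}^{\M}_{\overline S+R(\overline S)}(l,\tau) =
\mathop{\mathsf{opt}}_{a\in\act(l)}\sum_{l'\in\locations} \ratematrix(l,a,l') \cdot \big(\prob^{\M}_{\overline S+R(\overline S)}(l',\tau) - \prob^{\M}_{\overline S+R(\overline S)}(l,\tau)\big),
\]
with $\mathsf{opt}=\max$ for $l\in\locations_r$ and $\mathsf{opt}=\min$ for $l\in\locations_s$. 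By Lemma~\ref{lem:stupid} (applied to each player's single-player view, fixing the opponent's choice) this coupled system characterises the value of the game, so the patched pair is a genuine Nash equilibrium and both strategies are co-optimal.

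The main obstacle I expect is \textbf{not} the compactness step but the gluing at the finitely many switching points: at a boundary $t$ shared by two adjacent subcover intervals, the locally optimal strategies on the left and right may differ, and I must verify that the value function developed from the right support point agrees with the one developed from the left, so that the whole construction is consistent across all switching points simultaneously rather than only pairwise. I would handle this by developing the value function \emph{backwards} from $t_{\max}$ (where the boundary condition $f_{\mathsf{opt}}(l,t_{\max})$ is fixed by the goal/non-goal distinction), using continuity of $f_{\mathsf{opt}}$ to carry the correct support point from one interval into the next, exactly as the backward development in Figure~\ref{fig:developmentAtSwitchingPoint} is used in the single-player case. The finiteness of the subcover guarantees this backward sweep terminates after finitely many switches, and the equality of left and right values at each boundary follows from continuity of the solutions of the linear systems, completing the proof.
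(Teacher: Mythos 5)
Your proposal matches the paper's proof essentially step for step: the open cover from Lemma~\ref{lem:epsilon} plus compactness of $[0,t_{\max}]$ yields finitely many switching points, and the global consistency is established exactly as in the paper by a backward interval-by-interval development from $t_{\max}$ (the paper likewise notes that the construction of Theorem~\ref{theo:finite} does not rely on the $0$/$1$ initialisation, so each player's strategy is an optimal response against the opponent's piecewise-constant strategy), culminating in the same coupled $\min$/$\max$ differential equations characterising the equilibrium. Your appeal to Lemma~\ref{lem:stupid} in the single-player view with the opponent fixed is just a slightly different way of packaging the paper's invocation of the Theorem~\ref{theo:finite} machinery, so the argument is correct and essentially identical.
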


\begin{proof}
First, Lemma~\ref{lem:epsilon} provides us with an open coverage of co-optimal strategies that switch at most once, and we can build a strategy that switches at most finitely many times from a finite sub-cover of the open space $[0,t_{\max}]$.
This strategy is everywhere locally co-optimal, and forms a Nash equilibrium:

It is straight forward to cut the interval $[0,t_{\max}]$ into a finite set of sub-intervals $[0,t_0]$, $(t_0,t_{1}]$, \ldots, $(t_{n-1},t_n]$ with $t_n=t_{\max}$, such that the strategy for the safety player is constant in all of these intervals.
We can use the construction from Theorem~\ref{theo:finite} (note that the proof of Theorem~\ref{theo:finite} does not use that the differential equations are initialised to $0$ or $1$ at $t_{\max}$) to construct an optimal strategy for the reachability player:
We can first solve the problem for the interval $[t_{n-1},t_n]$, then for the interval $[t_{n-2},t_{n-1}]$ using ${f}_{\mathsf{opt}}(l,t_{n-1})$ as initialisation, and so forth.
A similar argument can be made for the other player.

This provides us with the same differential equations, namely:
$$-\dot{f}_{\mathsf{opt}}(l,t) =\max\limits_{a\in\act(l)} \sum\limits_{l'\in\locations} \ratematrix(l,a,l') \cdot \left(f_{\mathsf{opt}}(l',t) -f_{\mathsf{opt}}(l,t)\right)$$
for $t\in [0,t_{\max}]$ and $l \in \locations_r$, and
$$-\dot{f}_{\mathsf{opt}}(l,t) =\min\limits_{a\in\act(l)} \sum\limits_{l'\in\locations} \ratematrix(l,a,l') \cdot \left(f_{\mathsf{opt}}(l',t) -f_{\mathsf{opt}}(l,t)\right)$$
for $t\in [0,t_{\max}]$ and $l \in \locations_s$.

Note that all Nash equilibria need to satisfy these equations (with the exception of $0$ sets, of course), because otherwise one of the players could improve her strategy.
\qed
\end{proof}

The extension of these results to the full class of CTMGs is straight forward: We would first reprove Theorem~\ref{theo:discrete} in the style of the proof of Theorem~\ref{theo:finite} (which requires to establish the Theorem in the first place).
The only extension is that we additionally get an equation $\prob^D_l(\tau) = \sum_{l'\in\locations} \probabilitymatrix(l,a_l,l') \cdot\prob_{l'}^D(\tau)$ for every discrete location $l$.
The details are moved to Appendix~\ref{app:reproof}.

\begin{theorem}
For continuous-time Markov Games, there are cylindrical deterministic timed-positional co-optimal strategies for the reachability and the safety player.
\end{theorem}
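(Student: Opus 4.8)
The plan is to follow the roadmap sketched just above the statement: reprove the single-player discrete-location result (Theorem~\ref{theo:discrete}) in the differential-equation style of Theorem~\ref{theo:finite}, and then bolt on the two-player strategy-refinement of Lemma~\ref{lem:epsilon}. First I would set up, for a fixed deterministic timed-positional strategy $D$ that is held constant on a small interval, the \emph{mixed} system that combines differential and algebraic equations: for each continuous location $l\in\locations_c$ the usual
\[
-\dot{\prob}{}^D_l(\tau) = \sum_{l'\in\locations} \ratematrix(l,a_l,l')\cdot\big(\prob^D_{l'}(\tau)-\prob^D_l(\tau)\big),
\]
and for each discrete location $l\in\locations_d$ the additional algebraic equation $\prob^D_l(\tau)=\sum_{l'\in\locations}\probabilitymatrix(l,a_l,l')\cdot\prob^D_{l'}(\tau)$, where $a_l$ is the action chosen by $D$ at $l$.

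The key observation is that, since the model forbids cycles of discrete locations, the algebraic equations can be eliminated by back-substitution, by induction over the distance to the continuous locations exactly as in Lemma~\ref{lem:ssG}. Each discrete value becomes a fixed linear combination, with constant coefficients, of continuous-location values; substituting into the differential equations yields a pure homogeneous linear system $\dot y = A y$ with constant coefficients in the continuous-location values alone. Consequently the solutions $\prob^D_l(\tau)$, and the differences $\prob^{D'}_l(\tau)-\prob^D_l(\tau)$ of two positional strategies, keep the holomorphic form $\sum_i P_i(\tau)e^{\lambda_i\tau}$. The limit-point argument of Theorem~\ref{theo:finite} then applies verbatim: by the identity theorem, for every $t$ there is an $\varepsilon>0$ on which, for every pair $D,D'$ and every location, the sign of the relevant difference is constant on the left and on the right $\varepsilon$-environment of $t$ (comparing gains $-\dot{\prob}{}$ at continuous locations, and the algebraic values $\prob$ at discrete locations).

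Next I would run the two-player refinement of Lemma~\ref{lem:epsilon} on each such $\varepsilon$-environment. Fixing a positional strategy $S$ for the safety player on all of $\locations_s$ (continuous \emph{and} discrete), the reproved single-player result yields an optimal reachability counter-strategy $R(S)$; I then take a quality-minimal $\overline S$, where the quality vector now records the gains $-\dot{\prob}{}$ at continuous locations and the values $\prob$ at discrete locations at the interval midpoint. The ``pass control'' contradiction carries over to both location types: if at some safety location $l$ the min-equation were violated (the differential one for $l\in\locations_c\cap\locations_s$, the algebraic $\min_a\sum_{l'}\probabilitymatrix(l,a,l')\prob_{l'}$ for $l\in\locations_d\cap\locations_s$), then restricting the choice at $l$ to the two competing actions and passing it to the reachability player would, by the max-structure of Theorem~\ref{theo:finite}, strictly lower the quality vector, contradicting minimality of $\overline S$. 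The symmetric argument handles the left environment.

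Finally, the $\varepsilon$-environments cover the compact interval $[0,t_{\max}]$; a finite subcover yields a pair of deterministic timed-positional strategies with finitely many switching points, which I would assemble interval-by-interval from $t_{\max}$ backwards (using $f_{\mathsf{opt}}(\cdot,t_i)$, rather than $0$ or $1$, as the initialisation at each cut point) and resolve the initial discrete locations at $t=0$ by the induction of Lemma~\ref{lem:ssG}. Since the glued strategies satisfy the global max/min equations everywhere up to null sets, they form a Nash equilibrium and are therefore co-optimal. I expect the main obstacle to be the uniform treatment of the discrete locations: ensuring that the algebraic elimination genuinely preserves the holomorphic form the limit-point argument needs, and that both the quality-vector comparison and the ``pass control'' step of Lemma~\ref{lem:epsilon} stay correct when the contested location is discrete, so that the decisive quantity is an algebraically determined value $\prob_l$ rather than a rate-weighted gain $-\dot{\prob}{}_l$.
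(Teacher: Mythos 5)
Your proposal is correct and follows essentially the same route as the paper: the paper likewise reproves Theorem~\ref{theo:discrete} in the style of Theorem~\ref{theo:finite} (Appendix~\ref{app:reproof}), adding the algebraic equations $\prob^D_l(\tau)=\sum_{l'\in\locations}\probabilitymatrix(l,a_l,l')\cdot\prob^D_{l'}(\tau)$ for discrete locations, applying the identity-theorem sign-constancy argument to values at discrete locations and gains at continuous ones, and then invoking the two-player refinement of Lemma~\ref{lem:epsilon} together with the compactness gluing of Theorem~\ref{theo:fingame}. Your explicit back-substitution eliminating the discrete locations (justified by the absence of discrete cycles) usefully spells out a step the paper leaves implicit in its claim that each positional scheduler yields a constant-coefficient system $\dot{y}=Ay$, but it is the same argument.
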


As a small side result, these differential equations show us that we can, for each continuous location $l_c \in \locations_c$ and every action $a \in \act(l_c)$, add arbitrary values to $\ratematrix(l_c,a,l_c)$ without changing the bounded reachability probability for every pair of schedulers. (Only if we change $\ratematrix(l_c,a,l_c)$ to $0$ we have to make sure that $a$ is not removed from $\act(l_c)$.)
In particular, this implies that we can locally and globally uniformise a continuous-time Markov game if this eases its computational analysis. (Cf.\ \cite{NeuhausserDelayedNondeterminism09} for the simpler case of CTMDPs.)

\section{Variances}\label{sec:variances}

In this section, we discuss the impact of small changes in the setting, namely the impact of infinitely many states or actions, and the impact of introducing a non-absorbing goal region. 

\paragraph{Infinitely Many States. }
If we allow for infinitely many states, optimal solutions may require infinitely many switching points.
To see this, it suffices to use one copy of the CTMDP from Figure \ref{fig:automatonAndReachability}, but with rates $i$ and $2i$ for the $i$-th copy,
and assign an initial probability distribution that assigns a weight of $2^{-i}$ to the initial state $A_i$ of the $i$-th copy.
(If one prefers to consider only systems with bounded rates, one can choose rates $1+\frac{1}{i}$ and $2+\frac{2}{i}$ .)
The switching points are then different for every copy, and an optimal strategy has to select the correct switching point for every copy.

\paragraph{Infinitely Many Actions. }

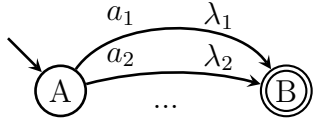
\begin{figure}[t]
\vspace{-.3cm}
\centering
\large
\begin{pspicture}[showgrid=false](-0.7,0)(2.5,3)
	\psset{arrowsize=5pt,nodesep=0pt,arrowlength=1,linewidth=1pt}
	\cnode[](0,1){10pt}{1}
	\rput(0,1){A}
	\psline[]{->}(-.7,1.7)(-.25,1.25)
	\cnode[linewidth=0.8pt](3,1){10pt}{goal}
	\cnode[linewidth=0.8pt](3,1){8pt}{goalinner}
	\rput(3,1){B}
	\nccurve[angleA=55,angleB=125]{->}{1}{goal}
	\nccurve[angleA=15,angleB=165]{->}{1}{goal}
	\rput(2.1,2.0){$\lambda_1$}
	\rput(0.8,2.0){$a_1$}
	\rput(2.1,1.45){$\lambda_2$}
	\rput(0.8,1.45){$a_2$}
	\rput(1.4,.8){...}
\end{pspicture}
\normalsize
\caption{An example CTMDP with infinitely many actions.}%
\label{fig:inftyactions}%
\end{figure}

If we allow for infinitely many actions, there is not even an optimal strategy if we restrict our focus to CTMDPs with two locations, an initial location and an absorbing goal location.
For the CTMDP of Figure~\ref{fig:inftyactions} with the natural numbers $\mathbb N$ as actions and rate $\lambda_i=2-\frac{1}{i}$ for the action $i \in \mathbb{N}$ if we have a reachability player and $\lambda_i=\frac{1}{i}$ if we have a safety player, every strategy $\S$ can be improved over by a strategy $\S'$ that always chooses the successor $i+1$ when of the action $i$ chosen by $\S$.

\paragraph{Reachability at $t_{\max}$.}
If we drop the assumption that the goal region is absorbing, one might be interested in the marginally more general problem to be (not to be) in the goal region at time $t_{max}$ for the reachability player (safety player, respecively).
For this generalisation, no substantial changes need to be made:
It suffices to replace
$$ f_{\mathsf{opt}}(l,t) = \prob_{\S}^{\M}(l,t) = 1 \qquad \mbox{ for all goal locations $l \in G$ and all }t \leq t_{\max}$$
by
$$ f_{\mathsf{opt}}(l,t_{\max}) = \prob_{\S}^{\M}(l,t_{\max}) = 1 \qquad \mbox{ for all goal locations }l \in G.$$

(In order to be flexible with respect to this condition, the $-\dot{f}_{\mathsf{opt}}(l,t)$ are defined for goal locations as well.
Note that, when all goal locations are absorbing, the value of $-\dot{f}_{\mathsf{opt}}(l,t)$ is $0$ and ${f}_{\mathsf{opt}}(l,t)$ is $1$ for all goal locations $l\in G$ and all $t\in [0,t_{\max}]$.)


\newpage
\appendix
\section*{\Large Appendix}

As to be expected by the topic, the paper is based in large parts on measure theory.
While the techniques are standard and straight forward for the experts in the field we provide a short introduction to the ideas exploited;
while we do not use any technique beyond the standard curriculum of a math degree, we assume that some recap of the ideas behind the completion of metric spaces and its application in our measures in Section~\ref{sec:complete}.
However, Section~\ref{sec:complete} is but a short introduction to the ideas, and cannot serve as a self contained introductory to the techniques.

Section~\ref{app:optimal} contains a short recap on the differential equations that describe $f_{\min}$ and $f_{\max}$, and, more generally, the development of $\prob_{\S}^{\M}(l,t)$ in Subsection~\ref{app:diff}, and a proof that $f_{\max}$ truly establishes an upper bound on the performance of any measurable scheduler in Subsection~\ref{app:proof}, which constitutes a proof of Lemma~\ref{lem:stupid}.

\section{Completion of Metric Spaces}
\label{sec:complete}
A metric space is called complete if every Cauchy sequence in it converges.
A Cauchy sequence in a metric space $(M,d)$ is a sequence $s:\mathbb N \rightarrow M$ such that the following holds:
$$\forall \varepsilon > 0\ \exists n \in \mathbb N\ \forall l,m > n.\ d\big(s(l),s(m)\big)<\varepsilon$$

Intuitively one could say that a Cauchy sequence converges, but not necessarily to a point within the space. For example, a sequence of rational numbers that converges to $\sqrt{2}$ is a converging sequence in the real numbers, but not in the rationals---as the limit point is outside of the carrier set---but it is still a Cauchy sequence.
\\

The basic technique to complete an incomplete metric space $(M,d)$ is to use the Cauchy sequences of this space as the new carrier set $M'$, and define a distance function $d'$ between two Cauchy sequences $s,s'\in M'$ of $M$ to be $d'(s,s')=\lim\limits_{n \rightarrow \infty}d\big((s(n),s'(n))\big)$.
Now, $(M',d')$ is not yet a metric space, because two different Cauchy sequences---for example the constant $0$ sequence and the sequence $s(n)=\frac{1}{2^n}$ of the rationals---can have distance $0$.

Technically, one therefore defines equivalence classes of Cauchy sequences that have distance $0$ with respect to $d'$ as the new carrier set $M''$ of a metric space $(M'',d'')$, where the distance function $d''$ is defined by using $d'$ on representatives of the quotient classes of Cauchy sequences.
This also complies with the intuition: a Cauchy sequence in $M$ is meant to represent its limit point (which is not necessarily in $M$), and hence two Cauchy sequences with the same limit point should be identified.

The resulting metric space $(M'',d'')$ is complete by construction. The simplest example of such a completion is the completion of the rational numbers into the real numbers.
And on this level, a straight forward effect of completion can be easily explained:
To end up with $(M'',d'')$ (or a space isomorphic to it), we can start with any \textbf{dense} subset $S$ of $M''$.

A subset $S\subset M''$ is dense in $M''$ if, for every point $m\in M''$ and every $\varepsilon >0$, there is a point $s\in S$ with $d''(m,s)<\varepsilon$.
Looking at the definition, one immediately sees the connection to Cauchy sequences: One could intuitively say that $S\subset M''$ is dense in $M''$ if, for every point $m\in M''$, there is a Cauchy sequence $s$ with limit point $m$.

Hence, it does not matter which dense set we use as a starting point. Of course, it works to use $M''$, in the example of the real numbers, we can start with the real numbers themselves without gaining anything by applying the completion twice, we can start with the transcendent numbers, the non-transcendent numbers, or, more down to earth, with finite decimal fractions. Note that a subset is dense in $M''$ if it is dense in some $S$ that is dense in $M''$.

\subsection{Application in Measure Theory}
Another famous application of this completion technique is the completion of Riemann integrable functions to Lebesgue integrable functions. The difference metrics between two functions is the Riemann integral over the absolute value of their difference.
(Strictly speaking, this does again not form a metric space, and we again have to use the quotient class of functions with difference $0$.)

Riemann integrable functions do, for example, allow only for bounded functions, but there are other problems as well; for example, we cannot integrate over the characteristic function of the rational numbers. (The wikipedia article to Riemann integrable functions is nice and gives a good overview on the weaknesses.)

Using the completion technique defined above, one can, for example, integrate over the characteristic function of the rationals by enumerating them, that is, by defining a surjection $s: \mathbb N \rightarrow \mathbb Q$, and choose $f_i$ to be the Riemann integrable function that is $1$ at the mapping $s(\{1,2,\ldots,i\})$ of the initial sequence of length $i$ of the naturals. Clearly, the limit of the sequence $f_1,f_2,f_3,\ldots$ is the characteristic function, the Riemann integral over all $f_i$ is $0$ (no matter over which interval we integrate) the sequence is a Cauchy sequence.


The completion of the space of Riemann integrable functions (which essentially establishes the Lebesgue integrable functions) is space of all Cauchy sequences of Riemann integrable functions (or again representatives of the quotient classes of equivalent Cauchy sequences). 

Again, it does not make a difference if we start the completion with the Riemann integrable functions, or with weaker concepts, as long as they are dense in the resulting space, or indeed in the Riemann integrable functions.
A well known example for such a class is the class of block functions, where the value changes only in finitely many positions.

\subsection{Application in Our Measure}
\label{app:ourmeasure}
The definition of the measures for continuous-time Markov chains, games (with fixed strategies), and decision processes (with a fixed scheduler) works in exactly this way:
For Markov chains, one defines the probability measure for simple disjoint (or almost disjoint) sets, for which the probability is simple to determine, and such that one can define the probability of reaching the goal region in time for cylindrical sets.

When considering games and decision processes without fixed scheduling policies, however, we have two layers of completions---one layer \emph{for a given} cylindrical schedulers, and one \emph{on} cylindrical schedulers. 

\paragraph{\bf Measure for a given cylindrical schedulers.}
In the case of games and decision processes, one starts to define it for a particularly friendly and easy to handle class of strategies or schedulers, respectively, such that the techniques from Markov chains can be extended with minor adjustments.
Our cylindrical schedulers---which in our paper also represent pairs of strategies in games---with only finitely many switching points are an example of such an extension.

The measure \emph{for} a given cylindrical scheduler is a mild extension of the techniques for Markov chains.
The building blocks of the probability measure define the probability on \emph{cylindrical sets}, and they are a straight forward extension of the probability measure for continuous time Markov chains to continuous time Markov decision processes (and games) with a fixed scheduler of this type.
However, they only describe the likelihood for cylindrical sets, and without completing the space we can but use finite sums over disjoint sets.

To obtain the time bounded reachability probability, we use Cauchy sequences of such sets that converge against all sets of paths on which a goal region is reached.
A representative of this equivalence class would be a sequence $P_1, P_2,P_3,\ldots$ or sets of paths, where 
$P_i$ contains the cylindrical sets of length up to $i$ in which the goal region is reached.
This is a Cauchy sequence (cf.~the argument in Subsection~\ref{app:proof}), and the limit contains all finite paths upon which the goal region is reached.

\paragraph{\bf Measure for a given cylindrical schedulers.}
While we have established a measure for a given cylindrical scheduler, the class of cylindrical schedulers is not particularly strong, and, like with Riemann integrable functions, we need to strengthen the class of schedulers we allow for.
In order to exploit the aforementioned completion technique, we have to create a suitable metric space on them, and in order to introduce such a metric space, we need a measure for the difference between strategies.
Such a measure reflects the likelihood that two different strategies ever lead to different actions.
For example, for \emph{deterministic} schedulers $D$ and $E$ for a CTMDP $\M$ we define a difference scheduler $\delta_{\{D,E\}}$ that uses the actions of $D$/$E$ on every history, on which they coincide, and a fresh action $(a_D,a_E)$ if $D$ chose $a_D$ and $E$ chose $a_E \neq a_D$ upon this history.
(Note that the difference measure uses a slightly adjusted CTMDP $\M'$; it also has a fresh goal location.)

The new action $(a_D,a_E)$ leads to a fresh continuous goal location $g$; the old goal locations do not remain goal locations, the new goal region contains only $g$.

For a continuous location $l$, we fix
\begin{itemize}
 \item $\ratematrix(l,(a_D,a_E),g)=\ratematrix(l,a_D,\locations)+\ratematrix(l,a_E,\locations)$ and
 \item $\ratematrix(l,(a_D,a_E),l')=0$ for all locations $l'\neq g$
\end{itemize}
for the new actions, and maintain the entries to in the rate matrix for the old actions.
(To be formally correct, we provide $g$ with only one enabled action $a$---with $\ratematrix(g,a,l')=0$ for all locations $l'\neq g$---that $\delta_{\{D,E\}}$ selects in $g$ upon any history.)

For discrete locations, we would fix $\probabilitymatrix(l,(a_D,a_E),g)=1$ (and $\probabilitymatrix(l,(a_D,a_E),l')=0$ for $l'\neq g$) for all new actions, and maintain the entries for the old actions in $\probabilitymatrix$.

The distance between two schedulers is then defined as the likelihood that $g$ is reached in the adjusted CTMDP within the given time bound when using the \emph{cylindrical} scheduler $\delta_{\{D,E\}}$.

\paragraph{\bf The metric space on cylindrical schedulers.}
This distance function almost defines a metric space on the cylindrical schedulers we used as a basic building block; symmetry, triangle inequation and non-negativity obviously hold.
However, we again have to resort to a carrier set of quotient classes of schedulers with distance $0$, in order to satisfy $d(x,y) \Leftrightarrow x=y$. (One can think of different actions on unreachable paths.)

\paragraph{\bf Effect of the metric.}
The distance is defined in a way that guarantees that the absolute value of the difference between the time bounded reachability probabilities for $D$ and $E$ is bounded by the distance between these schedulers (or representatives of their quotient class):
To see this, it suffices to look at the scheduler $\delta_{\{D,E\}}$ in $\M$ with two adjusted goal regions:
\begin{enumerate}
 \item If we use the old goal region plus $g$ as our new goal region, then obviously the time bounded reachability is better than the time bounded reachability of $D$ and $E$ in $\M$.
\item  If we use the old goal region (but not plus $g$---$g$ rather becomes a non-accepting sink---as a new goal region, then the time bounded reachability is worse than the time bounded reachability of $D$ and $E$ in $\M$.
\end{enumerate}

The difference between the two cases, however, is exactly the \emph{distance} between $D$ and $E$, which therefore in particular is an upper bound on the difference between their time bounded reachability probability.

\paragraph{\bf Completion.}
The time bounded reachability of a Cauchy sequence can therefore be defined as the limit of the time bounded reachability of the elements of the sequence, which is guaranteed to exist (and to be unique) by the previous argument.

\paragraph{\bf Extension to randomised schedulers.}
If the schedulers $D$ and $E$ from above are randomised, then we would define the distribution chosen by $\delta_{\{D,E\}}$ in line with the definition from above. We first fix an arbitrary global order on all actions used as tie breaker in our construction.

\begin{itemize}
\item If, in a particular history that ends in some location $l$, $D$ chooses an action $a$ with probability $p_D^a$ and $E$ chooses $a$ with probability $p_E^a$ then $\delta_{\{D,E\}}$ chooses $a$ with probability $\min\{p_D,p_E\}$

\item If the probabilities assigned by this rule sum up to one for $\delta_{\{D,E\}}$ on this history is thus defined.
\item Otherwise, if $l$ is continuous,  we choose an action $a_D$ among the actions with $p_D^a > p_E^a$ that maximises $\ratematrix(l,a,\locations)$, using the fixed global order as a tie breaker.
For discrete locations, we use the tie breaker only.

\item We then accordingly choose an action $a_E$ among the actions with $p_E^a > p_D^a$ that maximises $\ratematrix(l,a,\locations)$, using the fixed global order as a tie breaker.
For discrete locations, we again use the tie breaker only.
\item We assign the remaining probability weight to the decision $(a_D,a_E)$.
\end{itemize}

All arguments from above extend to this case.

\paragraph{\bf Measurable schedulers.}
The term \emph{measurable scheduler} refers to the limit scheduler of such a Cauchy sequence.
To restrict the attention to this class of schedulers is simply a requirement caused by the definition of the measure:
Only for such schedulers the time bounded reachability probability is defined.

However, we do not think this is a real drawback, as the set of measurable schedulers far outreaches the power of everything one might have in mind when talking about schedulers (like choosing $a$ on the points in time defined by the Cantor set), and its not easy to describe a non-measurable scheduler in the first place.

\section{Optimal reachability probability} 
\label{app:optimal}
\subsection{Differential Equations}
\label{app:diff}
The differential equations defining $f_{\mathsf{opt}}$ are simply the differential equations in place when a strategy is locally constant. This holds almost everywhere (everywhere but in a $0$-set of positions) in case of the \emph{cylindrical} schedulers that are the basic building blocks in the incomplete space that we have completed by considering Cauchy sequences of cylindrical schedulers.

Hence, for every cylindrical scheduler we can partition the interval $[0,t_{\max}]$ into a finite set of intervals $I_0$, $I_1$, $I_2$, $\ldots$, $I_n$ as described in the preliminaries.

Within such an interval,
\[-\dot{\prob_{\S}}(\pi,t) = \sum_{l'\in\locations} \ratematrix(l,a,l') \cdot \left(\prob_{\S}(\pi\xrightarrow{a,t} l',t) -\prob_{\S}(\pi,t)\right)\quad \mbox{ for }t\in I_i\]
holds for discrete schedulers, where $\pi$ is a timed path that ends in $l$, $a$ is the deterministic choice the scheduler makes in $I_i$ on this history, and $\pi\xrightarrow{a,t} l'$ is its extension.
%
For randomised schedulers, 
\[-\dot{\prob_{\S}}(\pi,t) = \sum_{a \in \act}h(a)\sum_{l'\in\locations} \ratematrix(l,a,l') \cdot \left(\prob_{\S}(\pi\xrightarrow{a,t} l',t) -\prob_{\S}(\pi,t)\right)\quad \mbox{ for }t\in I_i\]
holds, where $\pi$ is a timed path that ends in $l$, $h(a)$ is the likelihood that the cylindrical scheduler makes the decision $a$ in $I_i$ on this history, and $\pi\xrightarrow{a,t} l'$ is its extension.

To initialise the potentially infinite set of differential equations, we have the following initialisations:
\begin{itemize}
 \item [$\circ$] $\prob_{\S}(\pi,t) = 1$ holds for all timed histories $\pi$ that contain (and hence end up in) locations $l \in G$ in the goal region and all $t \leq t_{\max}$,
 \item [$\circ$] $\prob_{\S}(\pi,t_{\max}) = 0$ holds for all timed histories $\pi$ that contain only non-goal locations $l \notin G$, and
 \item [$\circ$] $\prob_{\S}(\pi,t) = 0$ holds for all locations $l \in \locations$ and all $t > t_{\max}$.
\end{itemize}

Additionally, we have to consider what happens at the intersection $t_i$ of the fringes of $I_i$ and $I_{i+1}$ for $0\leq i <n$.
But obviously, we can simply first solve the differential equations for $I_n$, then use the values of $f(\pi,t_{n-1})$ as initialisations for the interval $I_{n-1}$, and so forth.
\\

\noindent
\textbf{Remark:} For timed positional deterministic schedulers we get
\[-\dot{\prob_{\S}}(l,t) = \sum_{l'\in\locations} \ratematrix(l,a,l') \cdot \left(\prob_{\S}(l',t) -\prob_{\S}(l,t)\right)\quad \mbox{for }t\in I_i\mbox{, and}\]
\[-\dot{\prob_{\S}}(l,t) = \sum_{a \in \act}h(a)\sum_{l'\in\locations} \ratematrix(l,a,l') \cdot \left(\prob_{\S}(l',t) -f(l,t)\right)\quad \mbox{for }t\in I_i\]
for timed positional randomised schedulers. In both cases, the initialisation reads
\begin{itemize}
 \item [$\circ$] $\prob_{\S}(l,t) = 1$ holds for all goal locations $l \in G$ and all $t \leq t_{\max}$,
 \item [$\circ$] $\prob_{\S}(l,t_{\max}) = 0$ holds for all non-goal locations $l \notin G$, and
 \item [$\circ$] $\prob_{\S}(l,t) = 0$ holds for all locations $l \in \locations$ and all $t > t_{\max}$.
\end{itemize}
\smallskip

Obviously, these differential equations can also be used in the limit.

\subsection{Timed positional schedulers suffice for optimal time-bounded reachability}
\label{app:proof}

In this subsection we sketch a proof of a variant of Lemma~\ref{lem:stupid};
we demonstrate the following claim for arbitrary $t_{\max}\geq 0$:

\begin{lemma}
For a CTMDP $\M$ with only continuous locations, $\prob_{\S}^{\M}(l,t) \leq f_{\max}(l,t)$ holds for every scheduler $\S$, every location $l$, and every $t\in [0,t_{\max}]$.
\end{lemma}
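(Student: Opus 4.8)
The plan is to argue by contradiction, reducing an arbitrary measurable scheduler to a finite object on which an explicit induction and a Gr\"onwall-type comparison can be run. Suppose some measurable $\S$ satisfies $\prob_{\S}^{\M}(l_0,t_0) > f_{\max}(l_0,t_0)$ for some location $l_0$ and time $t_0 \in [0,t_{\max}]$, with a gap exceeding $3\varepsilon$. First I would peel off the two layers of the measure construction. Since $\S$ is (a representative of) a Cauchy sequence of cylindrical schedulers and, by the metric of Appendix~\ref{app:ourmeasure}, the time-bounded reachability is continuous in the scheduler distance, I can replace $\S$ by a cylindrical scheduler $\S'$ from the sequence that is $\varepsilon$-close, retaining a gap of at least $2\varepsilon$. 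Next, since the reachability probability of the cylindrical $\S'$ is by construction the limit of the probabilities of reaching $G$ in time using at most $n$ discrete transitions (the tail vanishes because the finitely many rational rates are bounded, so the number of jumps in $[0,t_{\max}]$ is almost surely finite), I can fix a bound $n_\varepsilon$ so that reaching $G$ in time \emph{and} within $n_\varepsilon$ steps still beats $f_{\max}(l_0,t_0)$ by at least $\varepsilon$.

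Write $g_n(\pi,t)$ for the probability, under $\S'$, of reaching $G$ within the time bound when the current history is $\pi$ (ending in a location $l$), the current time is $t$, and at most $n$ further transitions are permitted. The heart of the proof is the unconditional claim that $g_n(\pi,t) \le f_{\max}(l,t)$ for every $n$, every $\pi$ ending in $l$, and every $t$, which I would prove by induction on $n$. The base case $n=0$ is immediate: no transition can be taken, so $g_0(\pi,t) = 1 = f_{\max}(l,t)$ if $l\in G$, and $g_0(\pi,t)=0\le f_{\max}(l,t)$ otherwise.

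For the inductive step I would use the integral (renewal) form of $g_n$: for $l\notin G$ and the action $a$ prescribed by $\S'$ on $(\pi,t)$, $g_n(\pi,t)$ is the expectation over the time $s$ and target $l'$ of the first jump of $g_{n-1}(\pi\xrightarrow{a,s}l',s)$. Monotonicity together with the induction hypothesis $g_{n-1}(\pi\xrightarrow{a,s}l',s)\le f_{\max}(l',s)$ bounds $g_n(\pi,t)$ by the quantity $h(l,t)$ obtained by taking one step under the $\S'$-action and then following $f_{\max}$. The remaining sub-lemma is that $h(l,t)\le f_{\max}(l,t)$, i.e.\ that $f_{\max}$ is a super-solution under any fixed (piecewise constant) action choice. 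Setting $\phi = f_{\max}(l,\cdot) - h(l,\cdot)$ and differentiating, the defining $\max$-equation for $f_{\max}$ dominates the single-action equation $-\dot{h}(l,t)=\sum_{l'\in\locations}\ratematrix(l,a,l')f_{\max}(l',t)-\lambda h(l,t)$ for $h$, which yields $\dot\phi \le \lambda\,\phi$ with $\lambda=\ratematrix(l,a,\locations)$ and terminal value $\phi(t_{\max})=0$; since $e^{-\lambda t}\phi$ is then non-increasing and vanishes at $t_{\max}$, we obtain $\phi\ge 0$ on $[0,t_{\max}]$. The randomised case follows since a convex combination of the $h_a$ stays below $f_{\max}$. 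This closes the induction and produces $g_{n_\varepsilon}(l_0,t_0)\le f_{\max}(l_0,t_0)$, contradicting the $\varepsilon$-gap established above.

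I expect the genuine obstacle to be the two reductions rather than the comparison argument: making precise, within the completion framework, that passing from a general measurable $\S$ to a cylindrical scheduler and then capping the number of jumps costs only controllable error, and in particular that the per-history number of transitions in bounded time can be bounded \emph{uniformly} enough for the tail to be below $\varepsilon$. Once we are on a finite, bounded-step structure, the induction and the Gr\"onwall-style super-solution lemma are routine.
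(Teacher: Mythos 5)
Your proposal is correct and follows essentially the same route as the paper's proof in Appendix~\ref{app:proof}: contradiction with a $3\varepsilon$ gap, passage to an $\varepsilon$-close cylindrical scheduler from the Cauchy sequence, capping the number of transitions at $n_\varepsilon$ via the uniform Poisson-tail bound with $\lambda_{\max}$, and an induction over the resulting finite structure of bounded-length histories. The only deviation is local and welcome: where the paper closes the induction with a pointwise derivative comparison and an informal ``$f$ cannot fall steeper than $f_{\max}$ when it catches up'' argument, you use a one-step renewal bound followed by a Gr\"onwall super-solution lemma ($\dot\phi \le \lambda\phi$ with $\phi(t_{\max})=0$, hence $e^{-\lambda t}\phi \ge 0$ backwards in time), which makes the same comparison rigorous.
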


In the proof, we assume a scheduler that provides an $3\varepsilon$ better result,
and then sacrifice one $\varepsilon$ to transfer to cylindrical schedulers (going back to the simpler incomplete space of cylindrical schedulers, but with completed reachability measure),
and then sacrificing a second $\varepsilon$ to discard long histories from consideration (going back to the simple space of finite sums over cylindrical sets).

As a result, we can do the comparison in a simple finite structure.

\begin{proof}
Let us assume that the claim is incorrect.
Then, there is a CTMDP $\M$ with location $l_0$ and a scheduler $\S_{3\varepsilon}$ for $\M$ such that the time bounded reachability probability is at least $3\varepsilon$ higher for some $\varepsilon >0$ and $t_0\in[0,t_{\max}]$.
That is, $\prob_{\S_{3\varepsilon}}^{\M}(l_0,t_0) - f_{\max}(l_0,t_0) > 3 \varepsilon$

Let us fix appropriate $\M$, $l_0$, and $\S_{3\varepsilon}$.
(Note that $\S_{3\varepsilon}$ does not have to be timed positional or deterministic.)

Recall that $\S_{3\varepsilon}$ is the limit point of a Cauchy sequence of cylindrical schedulers.
Hence, almost all of these cylindrical schedulers have distance $<\varepsilon$ to $\S_{3\varepsilon}$.
\\

Let us fix such a cylindrical scheduler $\S_{2\varepsilon}$ with distance $<\varepsilon$ to $\S_{3\varepsilon}$.
The time bounded reachability probability of $\S_{2\varepsilon}$ is still at least $2\varepsilon$ higher compared to $f_{\max}$.
That is,  $\prob_{\S_{2\varepsilon}}^{\M}(l_0,t_0) - f_{\max}(l_0,t_0) > 2 \varepsilon$ holds true.

For $\S_{2\varepsilon}$, we now consider a tightened form of time bounded reachability, where we additionally require that the goal region is to be reached within $n_\varepsilon$ steps.
We choose $n_\varepsilon$ big enough that the likelihood of seeing more than $n_\varepsilon$ discrete events is less than $\varepsilon$. We call this time bounded $n_\varepsilon$ reachability.

\noindent
\textbf{Remark:} We can estimate $n_\varepsilon$ by taking the maximal transition rate $\lambda_{\max} = \max\{\ratematrix(l,a,\locations) \mid l \in \locations_c, a \in \act\}$, and choose $n_\varepsilon$ big enough that the likelihood of having more than $n_\varepsilon$ transitions was smaller than $\varepsilon$ even if all transitions had transition rate $\lambda_{\max}$.
As the number of steps is Poisson distributed in this case,  a suitable $n_\varepsilon$ is easy to find.
\\

The adjustment to time bounded $n_\varepsilon$ reachability leads to a small change in the initialisation of the differential equations:
For timed histories $\pi$ of length $>n_\varepsilon$ that do not contain a location $l \in G$ in the goal region within the first $n_\varepsilon$ steps, we use $f(\pi,t) = 0$ (even if it contains a goal region after more than $n_\varepsilon$ steps) for all $t \in [0,t_{\max}]$.
As the probability measure of all timed histories $\pi$ of length $>n_\varepsilon$ is $<\varepsilon$, time bounded $n_\varepsilon$ reachability for $\S_{2\varepsilon}$ is still at least $\varepsilon$ higher than the value for $f_{\max}$.

Let us use $f(\pi,t)$ to express the time bounded $n_\varepsilon$ reachability for $\S_{2\varepsilon}$ on a path $\pi$ at time $t$.
Then this claim can be phrased as
$f(l_0,t_0) - f_{\max}(l_0,t_0) > \varepsilon.$
\\

We have now reached a finite structure, and can easily show that this leads to a contradiction:
We provide an inductive argument which even demonstrates that $f_{\max}(l,t) \geq f(\pi,t)$ holds for all $\pi$ that end in $l$ and all $t\in [0,t_{\max}]$.
\\

As a basis for our induction, this obviously holds for all timed histories longer than $n_\varepsilon$: in this case, $f_{\max}(l,t)=1$ or $f(\pi,t)=0$ holds true (where the or is not exclusive).
\\

For our induction step, let us assume we have demonstrated the claim for all histories of length $>n$.
Let us, for a timed history $\pi$ of length $n$ that ends in $l$ and some point $t \in [0,t_{\max}]$ assume that $f_{\max}(l,t) \leq f(\pi,t)$.

For $l \in G$ the initialisation conditions immediately lead to the contradiction $1<f(\pi,t)$. 
For $l \notin G$, we can stepwise infer
\begin{description}
 \item[$-\dot{f}_{\max}(l,t) =$] $\max \big\{\sum_{l'\in\locations} \ratematrix(l,a,l') \cdot \left(f_{\max}(l',t) -f_{\max}(l,t)\right) \mid a \mbox{ is enabled in }l\big\}$
 \item[$\quad\geq $] $\sum_{distribution} \sum_{l'\in\locations} \ratematrix(l,a,l') \cdot \left(f_{\max}(l',t) -f_{\max}(l,t)\right)$
 \item[$\quad\geq$] $\sum_{distribution} \sum_{l'\in\locations} \ratematrix(l,a,l') \cdot \left(f_{\max}(l',t) -f(\pi,t)\right)$ \hfill (with $f_{\max}(l,t) \leq f(\pi,t)$)
 \item[$\quad\geq$] $\sum_{distribution} \sum_{l'\in\locations} \ratematrix(l,a,l') \cdot \left(f(\pi \xrightarrow{a,t} l',t) -f(\pi,t)\right)$ \hfill (with I.H.)
 \item[$\quad=$] $-\dot{f}(\pi,t)$.
 \end{description}

Taking into account that $f_{\max}(l,t_{\max})$ and $f(\pi,t_{\max})$ are both initialised to $0$ for $l \notin G$, we can, using the just demonstrated
$f_{\max}(l,t) \leq f(\pi,t) \Rightarrow \dot{f}_{\max}(l,t) \leq \dot{f}_(\pi,t)$,
infer $f_{\max}(l,t) \geq f(\pi,t)$ for all $t\in [0,t_{\max}]$: This inequation holds on the right fringe of the interval (initialisation), and when we follow the curves of $f(l,t)$ and $f_{\max}(l,t)$ to the left along $[0,t_{\max}]$, then every time $f$ would catch up with $f_{\max}$, $f$ cannot fall steeper than $f_{\max}$ (where `fall' takes the usual left-to-right view, in the right-to-left direction we consider one should maybe say `cannot have a steeper ascend') at such a position, and hence cannot not get above $f_{\max}$.
\\

In particular, $f(l_0,t_0)\leq f_{\max}(l_0,t_0)$, which contradicts the initial assumption.
\end{proof}

The $\min$ case can be proven accordingly, which provides a full proof of Lemma~\ref{lem:stupid}.

(Note that the extension to CTMDPs with both discrete and continuous locations is provided in Section~\ref{sec:discrete}.)

\section{Reproof of Theorem \ref{theo:discrete}}
\label{app:reproof}

To lift Theorem~\ref{theo:fingame} to the full class of CTMGs, we reprove Theorem~\ref{theo:discrete} in the style of the proof of Theorem~\ref{theo:finite}.
Recall that Theorem~\ref{theo:finite} establishes the existence of an optimal cylindrical scheduler \emph{using} the existence of an optimal measurable scheduler, and the form of the (differential) equations defining the time-bounded reachability probability for it.
The proof given in this appendix can therefore not been used to supersede the proof in the paper.

First we observe from the proof of Theorem~\ref{theo:discrete} that, for discrete locations $l\in \locations_d$, the equations
\[f_{\mathsf{opt}}(l,t) = \hspace*{-3mm}
{\begin{array}{c}
\vspace*{-4pt} \\
\mathsf{opt} \vspace*{-4pt}\\ \scriptsize \mbox{$a\in\act(l)$}
\end{array}}\sum_{l'\in\locations}  \probabilitymatrix(l,a_l,l') \cdot f_{\mathsf{opt}}(l',t) \mbox{ for }t\in [0,t_{\max}],\]
holds for $\mathsf{opt}\in \{\min,\max\}$,
and that they together with the differential equations for the continuous locations (the differential equations remain unchanged), define $f_{\mathsf{opt}}$.

The difference in the proof of Theorem~\ref{theo:dfinite} compared to the proof of Theorem~\ref{theo:finite} are marked in {\color{blue}blue}.
\begin{theorem}
\label{theo:dfinite}
For a single player continuous-time Markov game with only a reachability player, there is an optimal deterministic scheduler with finitely many switching points.
\end{theorem}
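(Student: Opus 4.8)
The plan is to follow the proof of Theorem~\ref{theo:finite} almost verbatim, reusing the same compactness argument on $[0,t_{\max}]$, and to isolate the single new ingredient caused by the discrete locations. First I would recall from the appendix that, for a fixed positional deterministic strategy $D$, the value functions now satisfy a \emph{mixed} system: the usual linear ODEs $-\dot{\prob}^D_l(\tau) = \sum_{l'\in\locations}\ratematrix(l,a_l,l')\big(\prob^D_{l'}(\tau)-\prob^D_l(\tau)\big)$ at every continuous location $l\in\locations_c$, together with the purely algebraic equations $\prob^D_l(\tau) = \sum_{l'\in\locations}\probabilitymatrix(l,a_l,l')\prob^D_{l'}(\tau)$ at every discrete location $l\in\locations_d$. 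The crucial observation is that, because the paper forbids cycles of discrete locations, every discrete-location value unfolds into a finite linear combination of continuous-location values. Substituting these combinations back into the ODEs eliminates the discrete locations entirely and leaves a homogeneous linear system $\dot y = Ay$ with constant coefficients on the continuous locations only. Hence the continuous-location solutions are again finite sums $\sum_i P_i(\tau)e^{\lambda_i\tau}$, and the discrete-location values, being finite linear combinations of these, are holomorphic as well.

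With holomorphicity re-established, the identity theorem applies exactly as before: for each pair of strategies $D,D'$ and each location $l$, the difference $\prob^{D'}_l(\tau)-\prob^D_l(\tau)$ (and, at continuous locations, its derivative) either has no limit point of zeros at $t$ or vanishes on a whole $\varepsilon$-environment. I would therefore fix, for a given $t$, a sufficiently small $\varepsilon>0$ so that every relevant comparison is sign-definite on the half-environments $L^t_\varepsilon$ and $R^t_\varepsilon$. The strategy-improvement loop then proceeds as in Theorem~\ref{theo:finite}, but with a two-tier local criterion: at a continuous location I compare the gains $-\dot{\prob}^D_l$ (the ``better/preferable'' ordering of Theorem~\ref{theo:finite}), while at a discrete location the correct local criterion is the \emph{value} $\prob^D_l = \sum_{l'}\probabilitymatrix(l,a_l,l')\prob^D_{l'}$ itself, since the Bellman equation for discrete locations maximises a weighted sum of successor values rather than a gain. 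Switching, in each step, to the dominating action at every location where the current strategy is beaten yields a strictly better strategy; as the positional deterministic strategies form a finite set, only finitely many improvements are possible, and the loop terminates in a strategy $D_{\max}$ that simultaneously satisfies the $\max$-versions of both the differential and the algebraic equations on $L^t_\varepsilon$. Repeating the construction on $R^t_\varepsilon$ and breaking the tie at $t$ in favour of the left environment gives an optimal strategy with at most one switching point on $I^t_\varepsilon = L^t_\varepsilon\cup R^t_\varepsilon$; these intervals cover $[0,t_{\max}]$, and a finite subcover produces the desired cylindrical scheduler with finitely many switching points.

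The main obstacle I expect is the coupling between the two tiers in the improvement loop: because discrete-location values feed into the gains of the continuous locations that target them, one cannot optimise the two kinds of locations independently, and it must be verified that simultaneously switching to locally dominating actions (value-dominating at discrete, gain-dominating at continuous) genuinely produces a globally better value function. This is a monotonicity/comparison argument in the spirit of Lemma~\ref{lem:stupid}, now run over the combined differential--algebraic system; care is needed to check that the algebraic substitution does not destroy the sign-definiteness secured by the identity theorem, and that the inductive (distance-to-continuous-locations) treatment of discrete locations from the proof of Theorem~\ref{theo:discrete} meshes with the interval-wise constancy obtained from holomorphicity. Once this comparison is in place, the remainder is a line-by-line transcription of the proof of Theorem~\ref{theo:finite}.
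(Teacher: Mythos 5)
Your proposal follows essentially the same route as the paper's own proof: the same mixed system of ODEs at continuous locations and algebraic equations at discrete locations, the same holomorphicity and identity-theorem argument, the same two-tier local criterion (sign-definite \emph{value} differences $\prob^D_l-\prob^{D'}_l$ at discrete locations, sign-definite \emph{gain} differences at continuous ones), the same finite strategy-improvement loop yielding $D_{\max}$, and the same left/right $\varepsilon$-environment plus compactness argument. Your explicit elimination of discrete locations by acyclic substitution, reducing to $\dot y = Ay$ on the continuous locations alone, is exactly the justification the paper leaves implicit when it asserts the solutions remain exponential-polynomial, so the proposal is correct.
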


\begin{proof}
We have seen that the true optimal reachability probability is defined by a system of {\color{blue}equations and} differential equations.
In this proof we consider the effect of starting with the `correct' values for a time $t\in [0,t_{\max}]$, but
\emph{locally fix a positional strategy} for a small left or right $\varepsilon$-environment of $t$.
That is, we consider only schedulers that keep their decision constant for a (sufficiently) small time $\varepsilon$ before or after $t$.

Given a CTM{\color{blue}G} $\M$, we consider the {\color{blue}equations and} differential equations that describe the development of the reachability probability for each location $l$ under a positional deterministic strategy $D$:
\[
-\dot{\prob}{}^D_l(\tau) = \sum_{l'\in\locations} \ratematrix(l,a_l,l') \cdot \big(\prob_{l'}^D(\tau)-\prob_{l}^D(\tau)\big)\quad \color{blue} \mbox{ for }l\in \locations_c,\vspace{-1mm}
\]
\[\color{blue}
{\prob}{}^D_l(\tau) = \sum_{l'\in\locations} \probabilitymatrix(l,a_l,l') \cdot \prob_{l'}^D(\tau) \quad \mbox{ for }l\in \locations_d,\vspace{-1mm}
\]
where $a_l$ is the action chosen at $l$ by $D$, 
 starting at the support point $f_{\max}(l,t)$.

Different to the development of the true probability, the development of these linear differential equations provides us with smooth functions.
This provides us with more powerful techniques when comparing two locally positional strategies:
Each deterministic scheduler defines a system $\dot{y} = A y$ of ordinary homogeneous linear differential equations with constant coefficients.

As a result, the solutions $\prob^D_l(\tau)$ of these differential equations---and hence their differences $\prob^{D'}_l(\tau)-\prob^D_l(\tau)$---can be written as finite sums $\sum_{i=1}^n P_i(\tau) e^{\lambda_i \tau}$, where $P_i$ is a polynomial and the $\lambda_i$ may be complex.
Consequently, these functions are holomorphic.

Using the identity theorem for holomorphic functions, $t$ can only be a limit point of the set of $0$ points of $\prob^{D'}_l(\tau)-\prob^D_l(\tau)$ if $\prob^{D'}_l(\tau)$ and $\prob^D_l(\tau)$ are identical on an $\varepsilon$-environment of $t$.
The same applies to their derivations: $\dot{\prob}{}^{D'}_l(\tau)-\dot{\prob}{}^D_l(\tau)$ either has no limit point in $t$, or $\dot{\prob}{}^{D'}_l(\tau)$ and $\dot{\prob}{}^D_l(\tau)$ are identical on an $\varepsilon$-environment of $t$.

For the remainder of the proof, we fix, for a given time $t$, a sufficiently small $\varepsilon>0$ such that, for each pair of schedulers $D$ and $D'$ {\color{blue} the following holds:
for} every location $l\in\locations_{\color{blue}c}$,  $\dot{\prob}{}^{D'}_l(\tau)-\dot{\prob}{}^{D}_l(\tau)$ is either $<0$, $=0$, or $>0$ on the complete interval $L_\varepsilon^t=(t-\varepsilon,t)\cap [0,t_{\max}] \ni \tau$, and, possibly with different sign, for the complete interval $R_\varepsilon^t=(t,t+\varepsilon)\cap [0,t_{\max}] \ni \tau${\color{blue}; 
and for every location $l\in\locations_d$,  ${\prob}{}^D_l(\tau)-{\prob}{}^{D'}_l(\tau)$ is either $<0$, $=0$, or $>0$ on the complete interval $L_\varepsilon^t=(t-\varepsilon,t)\cap [0,t_{\max}] \ni \tau$, and, possibly with different sign, for the complete interval $R_\varepsilon^t=(t,t+\varepsilon)\cap [0,t_{\max}] \ni \tau$.}

We argue the case for the left $\varepsilon$-environment $L_\varepsilon^t$.
In the `$>$' case for a location $l$, we say that $D$ is \emph{$l$-better} than $D'$.
We call $D$ \emph{preferable} over $D'$ if $D'$ is not $l$-better than $D$ for any location $l$, and \emph{better} than $D'$ if $D$ is preferable over $D'$ and $l$-better for some $l\in \locations$.

If $D'$ is $l$-better than $D$ in exactly a non-empty set $\locations_b\subset\locations$ of locations, then we can obviously use $D'$ to construct a strategy $D''$ that is better than $D$ by switching to the strategies of $D'$ in exactly the locations $\locations_b$.

Since we choose our strategies from a finite domain---the deterministic positional schedulers---this can happen only finitely many times.
Hence we can stepwise \emph{strictly} improve a strategy, until we have constructed a strategy $D_{\max}$ that is preferable over all others.

By the definition of being preferable over all other strategies, $D_{\max}$ satisfies
$$-\dot{\prob}{}^{D_{\max}}_l(\tau) = \max_{a\in\act(l)}\sum_{l'\in\locations} \ratematrix(l,a,l') \cdot \big(\prob_{l'}^{D_{\max}}(\tau) - \prob_{l}^{D_{\max}}(\tau)\big) \quad \mbox{ for all } \tau \in L_\varepsilon^t, l\in \locations_{\color{blue}c},$$
\[\color{blue}{\prob}{}^{D_{\max}}_l(\tau) = \max_{a\in\act(l)}\sum_{l'\in\locations} \probabilitymatrix(l,a,l') \cdot \prob_{l'}^{D_{\max}}(\tau) \quad \mbox{ for all } \tau \in L_\varepsilon^t, l\in \locations_{d}.\]

We can use the same method for the right $\varepsilon$-environment $R_\varepsilon^t$, and pick the decision for $t$ arbitrarily;
we use the decision from the respective left $\varepsilon$ environment.

Now we have fixed, for an $\varepsilon$-environment of an arbitrary $t\in [0,t_{\max}]$, an optimal scheduler with at most one switching point.
As this is possible for all points in $[0,t_{\max}]$, the sets $I_\varepsilon^t = L_\varepsilon^t \cup R_\varepsilon^t$ define an open cover of $[0,t_{\max}]$.
Using the compactness of $[0,t_{\max}]$, we infer a finite sub-cover, which establishes the existence of a strategy with a finite number of switching points.
\qed
\end{proof}

Again, the proof for single player safety games runs accordingly.

\begin{theorem}
\label{theo:dfinitea}
For a single player continuous-time Markov game with only a safety player, there is an optimal deterministic scheduler with finitely many switching points.
\end{theorem}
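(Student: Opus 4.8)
The plan is to mirror the proof of Theorem~\ref{theo:dfinite} line by line, dualising every extremal quantity: we replace $\max$ by $\min$ throughout, and we reverse the direction of the strategy-improvement step so that the safety player drives the gain \emph{down} rather than up. First I would record that, exactly as in the reachability case, the true optimal (here minimal) reachability probability for a single player safety CTMG is characterised by the same system of equations and differential equations, only with $\mathsf{opt}=\min$: for every continuous location $l\in\locations_c$ we keep
\[-\dot{\prob}{}^D_l(\tau) = \sum_{l'\in\locations} \ratematrix(l,a_l,l') \cdot \big(\prob_{l'}^D(\tau)-\prob_{l}^D(\tau)\big),\]
and for every discrete location $l\in\locations_d$ we keep ${\prob}{}^D_l(\tau) = \sum_{l'\in\locations} \probabilitymatrix(l,a_l,l') \cdot \prob_{l'}^D(\tau)$, where $a_l$ is the action chosen by the positional deterministic strategy $D$ and the solution is developed from the support point $f_{\min}(l,t)$.

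The analytic backbone of the argument is untouched by the change from maximisation to minimisation. Each deterministic positional strategy still induces a homogeneous linear system $\dot y = Ay$ with constant coefficients, so every $\prob^D_l$ is a finite sum $\sum_{i=1}^n P_i(\tau) e^{\lambda_i\tau}$ and hence holomorphic; the differences $\prob^{D'}_l-\prob^D_l$ and their derivatives are therefore holomorphic as well. Applying the identity theorem for holomorphic functions exactly as before, a fixed $t$ can be a limit point of the zero set of such a difference only if the two functions coincide on a whole $\varepsilon$-environment of $t$. I would then fix, for the given $t$, a single $\varepsilon>0$ small enough that for every pair of strategies the sign of $\dot{\prob}{}^{D'}_l-\dot{\prob}{}^D_l$ (for $l\in\locations_c$) and of $\prob^D_l-\prob^{D'}_l$ (for $l\in\locations_d$) is constant on each of $L_\varepsilon^t$ and $R_\varepsilon^t$.

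The only genuine adaptation is in the bookkeeping of the improvement step, where the preference direction is reversed. I would declare $D$ to be \emph{$l$-better} than $D'$ in the `$<$' case (that is, when $D$ yields the \emph{smaller} gain $-\dot{\prob}{}^D_l$ at a continuous $l$, and analogously the smaller probability $\prob^D_l$ at a discrete $l$), call $D$ \emph{preferable} over $D'$ when $D'$ is $l$-better than $D$ at no location, and iterate: whenever some competitor is $l$-better on a non-empty set $\locations_b$, I would switch $D$ to that competitor's choices on $\locations_b$ to obtain a strictly better (strictly lower-gain) strategy. Since the positional deterministic strategies form a finite set, strict improvement terminates at a strategy $D_{\min}$ preferable over all others, and by construction $D_{\min}$ satisfies the $\min$-equations
\[-\dot{\prob}{}^{D_{\min}}_l(\tau) = \min_{a\in\act(l)}\sum_{l'\in\locations} \ratematrix(l,a,l') \cdot \big(\prob_{l'}^{D_{\min}}(\tau) - \prob_{l}^{D_{\min}}(\tau)\big)\]
for $l\in\locations_c$ and the corresponding $\min$-equation over $\probabilitymatrix$ for $l\in\locations_d$, on all of $L_\varepsilon^t$ (and, treating $R_\varepsilon^t$ identically, on the right). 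This gives for every $t\in[0,t_{\max}]$ an $\varepsilon$-environment $I_\varepsilon^t=L_\varepsilon^t\cup R_\varepsilon^t$ carrying an optimal strategy with at most one switch; covering the compact interval $[0,t_{\max}]$ and extracting a finite subcover yields finitely many switching points.

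I expect the only delicate point to be verifying that the termination-after-finitely-many-improvements argument really singles out a strategy that is globally optimal \emph{on the whole environment} rather than merely locally non-improvable: one must check that being ``preferable over all others'', combined with the constant-sign choice of $\varepsilon$, forces the pointwise $\min$ to be attained by $D_{\min}$ simultaneously at every location and every $\tau\in L_\varepsilon^t$. This, however, is precisely the content already established for the dual $\max$ case in Theorem~\ref{theo:dfinite}, so the safety proof requires no new idea beyond the systematic exchange of $\min$ for $\max$ and the reversal of the preference order.
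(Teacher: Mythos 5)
Your proposal is correct and takes essentially the same route as the paper: the paper proves Theorem~\ref{theo:dfinitea} simply by remarking that the proof of Theorem~\ref{theo:dfinite} ``runs accordingly'' for the safety player, and the systematic dualisation you spell out (replacing $\max$ by $\min$, reversing the preference order in the strategy-improvement step, keeping the equations for discrete locations alongside the differential equations for continuous ones) is exactly what that remark means. Your observation that the analytic backbone---holomorphic solutions of the constant-coefficient linear systems, the identity theorem, sign-constant $\varepsilon$-environments $L_\varepsilon^t$ and $R_\varepsilon^t$, finite strict improvement over positional deterministic strategies, and the compactness cover of $[0,t_{\max}]$---is untouched by the change of optimisation direction is precisely the justification the paper relies on.
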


\section{From Late to Early Scheduling}
\label{app:late2early}
Our main motivation for introducing discrete transitions is not the slightly improved generality of the model (nice though it is as a side result), but the introduction of a framework that covers \emph{both} early schedulers (which have to fix an action when \emph{entering} a location), and the late schedulers used in the paper.

Late schedulers are naturally subsumed in our model, as the schedulers we assume are the more powerful late schedulers.
To embed early schedulers as well, it suffices to use a simple translation:
we `split' every continuous location $l_c$ into a fresh discrete location $l_c^d$, and one fresh continuous location $l_c^a$ for each action $a\in \act(l_c)$ enabled~in~$l_c$.

Every incoming transition to $l_c$ is re-routed to $l_c^d$, $l_c^d$ has an outgoing transition $a$ that surely leads to $l_c^a$ ($\probabilitymatrix(l_c^d,a,l_c^a)=1$) for each action $a\in \act(l_c)$ enabled in $l_c$, and no other outgoing transition.
In $l_c^a$, we have $\act(l_c^a)=\{a\}$, and the entries in  $\ratematrix(l_c^a,a,l)$ are the entries taken from $\ratematrix(l_c,a,l)$ for discrete locations $l$, and re-routed to the respective $l^d$ for continuous locations.
Probability mass assigned to $l_c$ is moved to $l_c^d$ by the translation, and if $l_c$ is a goal state, so are $l_c^d$ and the $l_c^a$'s.

Intuitively, every occurrence of $\xrightarrow{*,t} l_c \xrightarrow{a,t'}$ is replaced by  $\xrightarrow{*,t} l_c^d \xrightarrow{a,t} l_c^a \xrightarrow{a,t'}$;
$l_c$ is the beginning of the path, $l_c \xrightarrow{a,t'}$ is replaced by $l_c^d \xrightarrow{a,t} l_c^a \xrightarrow{a,t'}$.

Obviously, there is a trivial bijection between early schedulers for a thus translated CTMDP (or, indeed, CTMG), and the late schedulers in the mapping: the actions chosen in a discrete location are doubled, and there is no alternative to doubling it.

As a consequence, the existence of finite deterministic optimal control extends to early scheduling. 

\paragraph{\bf From Early to Late Scheduling. } As a side remark, we would like to point out that a similar translation can be used to reduce finding optimal control for late schedulers to finding optimal control for early schedulers.
Following up on the remark that---assuming late scheduling---we can work with the uniformisation of a CTMG when seeking co-optimal control, it suffices to establish such a translation for uniform CTMGs, that is, for CTMGs where the transition rate $\ratematrix(l_c,a,\locations)$ is constant for all continuous locations $l_c \in \locations$, and all their enabled actions $a \in \act(l_c)$.
And for such uniform CTMGs we can move the decision into a fresh discrete location \emph{after} the continuous location, just as we moved it to a fresh discrete location \emph{before} the continuous location in our reduction from late to early scheduling.

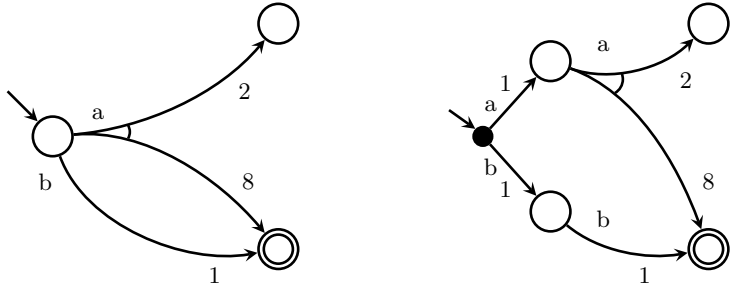
\begin{figure}[t]%
\centering
\hspace{.5cm}\begin{pspicture}[showgrid=false](0,0.3)(4,4)
	\psset{arrowsize=5pt,nodesep=0pt,arrowlength=1,linewidth=1pt}
	\cnode[](0,2){8pt}{1}
	\cnode[](3,3.5){8pt}{2}
	\cnode[linewidth=1pt](3,0.5){8pt}{3}
	\cnode[linewidth=1pt](3,0.5){6pt}{3.2}

	\nccurve[angleA=5,angleB=-130]{->}{1}{2}
	\nccurve[angleA=5,angleB=130]{->}{1}{3}
	\nccurve[angleA=-70,angleB=190]{->}{1}{3}
	
	\psarc[]{-}(.85,2.06){5pt}{-30}{33}
	
	\rput(.6,2.3){{a}}
	\rput(-.1,1.4){{b}}
	
	\rput(2.55,2.6){2}
	\rput(2.6,1.4){8}
	\rput(2.15,0.15){1}
	
	\psline[]{->}(-.6,2.6)(-.2,2.2)
\end{pspicture}
\hspace{1.5cm}
\begin{pspicture}[showgrid=false](0,0.3)(4,4)
	\psset{arrowsize=5pt,nodesep=0pt,arrowlength=1,linewidth=1pt}
	\cnode[](.9,3){8pt}{1a}
	\cnode[](.9,1){8pt}{1b}
	\cnode[](3,3.5){8pt}{2}
	\cnode[linewidth=1pt](3,0.5){8pt}{3}
	\cnode[linewidth=1pt](3,0.5){6pt}{3.2}

	\cnode[linewidth=2pt,fillstyle=solid,fillcolor=black](0,2){4pt}{discrete}
	\rput(0.1,2.4){a}
	\rput(.3,2.7){1}
	\rput(0.1,1.6){b}
	\rput(.3,1.3){1}
	\ncline[]{->}{discrete}{1a}
	\ncline[]{->}{discrete}{1b}

	\nccurve[angleA=-20,angleB=-135]{->}{1a}{2}
	\nccurve[angleA=-20,angleB=110]{->}{1a}{3}
	\nccurve[angleA=-40,angleB=190]{->}{1b}{3}
	
	\psarc[]{-}(1.65,2.76){6pt}{-60}{20}
	
	\rput(1.6,3.2){{a}}
	\rput(1.6,.9){{b}}
	
	\rput(2.7,2.75){2}
	\rput(3,1.4){8}
	\rput(2.15,0.15){1}
	
	\psline[]{->}(-.45,2.35)(-.1,2.1)
\end{pspicture}
\caption{An informal example, depicting the idea of the encoding of an early scheduling CTMG (left) in a late scheduling CTMG (right).}%
\label{fig:earlyToLate}%
\end{figure}

\end{document}